\title[ ]{The     spectra  of   surface Maryland model for all phases}
\author{Wencai Liu}
\address[Wencai Liu]{School of Mathematical Sciences, Fudan University, Shanghai 200433, P. R. China} \email{liuwencai1226@gmail.com}
\address{Current address: Department of Mathematics, University of California, Irvine, California 92697-3875, USA}
\theoremstyle{plain}
\newtheorem{theorem}{Theorem}[section]
\newtheorem{lemma}[theorem]{Lemma}
\theoremstyle{definition}
\newtheorem{definition}[theorem]{Definition}
\newtheorem{remark}[theorem]{Remark}
\begin{document}


\begin{abstract}
We study the discrete Schr\"{o}dinger operators $H_{\lambda,\alpha,\theta}$ on $\ell^2(\mathbb{Z}^{d+1})$ with surface potential of the form $V(n,x)=\lambda \delta(x)\tan\pi(\alpha\cdot n+\theta)$,  and  $H_{\lambda,\alpha,\theta}^{+}$ on $\ell^2(\mathbb{Z}^{d}\times \mathbb{Z}_+)$ with the boundary condition $ \psi_{(n,-1)}=\lambda \tan\pi(\alpha\cdot n+\theta)\psi_{(n,0)} $, where $\alpha\in \mathbb{R}^d$  is rationally independent. We show that the spectra of $H_{\lambda,\alpha,\theta}$ and  $H_{\lambda,\alpha,\theta}^{+}$  are $(-\infty,\infty)$
for all parameters.  
We can also  determine  the absolutely  continuous spectra  and Hausdorff dimension  of the spectral measures   if $d=1$.

\end{abstract}
\maketitle

 \section{Introduction}

    In this paper, we study the discrete Schr\"{o}dinger operators with surface potential. Let $ d\geq 1$ be given and $ \mathbb{Z}_+=\{0,1,2, \cdots\}$. We denote the points in $\mathbb{Z}^{d+1} $ (or  $\mathbb{Z}^{d+1}_+=\mathbb{Z}^{d}\times \mathbb{Z}_+$) by $(n,x)$, where $n\in  \mathbb{Z}^{d} ,x\in \mathbb{Z}$
    (or $n\in \mathbb{Z}^{d}$, $x\in  \mathbb{Z}_+$).
    \par
    Given $y=(y_1,y_2,\cdots,y_k)\in \mathbb{Z}^{k}$ with $k\geq 1$, denote by $||y||_k=\sum_{j=1}^k|y_j|$.

    Now we introduce two self-adjoint operators $H_{\lambda,\alpha,\theta}$ and $H_{\lambda,\alpha,\theta}^+$. The operator $H_{\lambda,\alpha,\theta}$ acts on $\ell^2(\mathbb{Z}^{d+1} )$ as
    \begin{equation}\label{Def.H}
        ( H_{\lambda,\alpha,\theta}\psi)_{(n,x)}=\sum_{||(n-n^\prime,x-x^\prime) ||_{d+1}=1}\psi_{(n',x')} +\lambda \delta(x)\tan\pi(\alpha\cdot n+\theta)\psi_{(n,x)},
    \end{equation}
 where $\lambda\neq 0$ is the coupling, $\alpha\in \mathbb{R}^d$ is the frequency, and $\theta $ is the phase. In order to make the potential well defined,
we  always assume $\theta$  satisfies the following condition
\begin{equation}\label{Contheta}
   k\cdot \alpha +\theta \neq \frac{1}{2} \mod \mathbb{Z} \text{ for any } k\in \mathbb{Z}^d.
\end{equation}
  The other operator $H_{\lambda,\alpha,\theta}^+$ on $\ell^2( \mathbb{Z}^{d}\times \mathbb{Z}_+ )$ is given by the following equation,
     \begin{equation}\label{Def.H^+}
        ( H_{\lambda,\alpha,\theta}^+\psi)_{(n,x)}=
\left\{
  \begin{array}{ll}
   \sum_{||(n-n^\prime,x-x^\prime) ||_{d+1}=1}\psi_{(n',x')}  , & \hbox{if   $x>$ 0;} \\
   \psi_{(n,1)} + \sum_{||n'-n||_d =1}  \psi_{(n',0)  } +\lambda \tan\pi(\alpha\cdot n+\theta)\psi_{(n,0)}, & \hbox{if $x=$ 0.}
  \end{array}
\right.
    \end{equation}

Whenever the meaning is clear within the context, we will drop the dependence on   parameters   $\lambda$, $\alpha$, $\theta$ and $d$. Thus
we write $H$ or $H_\theta$ for $H_{\lambda,\alpha,\theta}$,  and  $||n'-n||$  for $||n'-n||_d$.
\par
Recall that   the   Maryland model is given by
 \begin{equation*}
        ( h_ {\lambda,\alpha,\theta}\psi)_n=\sum_{||n'-n|| =1}\psi_{n' }+\lambda  \tan\pi(\alpha\cdot n+\theta)\psi_n,
    \end{equation*}
with $n, n'\in \mathbb{Z}^d$. It has been heavily studied in \cite{MR767188,grempel1982localization,MR776654,cyconschrodinger}.
In particular,
for $d=1$, Jitomirskaya-Liu has obtained the complete description of the three spectral types of Maryland model \cite{liulana1}.\color{black}
\par
 Now we want  to say more about  the surface multidimensional operators. Let $H_0$   be the discrete  Laplacian   on $\ell^2(\mathbb{Z}^{d+1})$,
and
$H_0^+$   be the discrete  Laplacian on $\ell^2(\mathbb{Z}^{d+1}_+)$ with Dirichlet boundary condition.
More precisely,
 \begin{equation*}
        ( H_0\psi)_y=\sum_{||y'-y|| =1}\psi_{y'},
    \end{equation*}
where $y,y'\in \mathbb{Z}^{d+1}$.
  And
    \begin{equation*}
        ( H_0^+\psi)_{(n,x)}= \sum_{||(n'-x',n-x)||=1}\psi_{(n',x')}
    \end{equation*}
 with $\psi_{(n,-1)}=0 $, where $n\in \mathbb{Z}^{d}$ and $x\in \mathbb{Z}_+$.
\par
Consider the operator $H$.
The  support of the  potential   is on the hyperplane $\mathbb{Z}^d$  of the space $\mathbb{Z}^{d+1}$. For the other one  $H^+$,
it is the discrete  Laplacian on $\ell^2(\mathbb{Z}_+^{d+1})$ with the boundary condition $\psi_{(n,-1)}=\lambda  \tan\pi(\alpha\cdot n+\theta)\psi_{(n,0 )}$.
Thus it is natural to call both  operators $H$ and  $H_+$ the surface Maryland model.
\par
 Denote
   by  
 $\sigma(H)$ and  $\sigma_{ac}(H)$  the  spectrum, and absolutely continuous spectrum of  the self-adjoint operator $H$.
We recall that $\sigma(H_0)=[-2(d+1),2(d+1)]$ ($\sigma(H_0^+)=[-2(d+1),2(d+1)]$)  and the spectrum of $H_0 $ ($H_0^+$) is purely absolutely
continuous. Denote by $c_d=2(d+1)$. \color{black}

We recall that $\alpha=(\alpha_1,\alpha_2,\cdots,\alpha_d)$ is rationally  independent if
for any choice of integers $(k_1,k_2,\cdots k_d)\ \neq 0$,
\begin{equation*}
    \sum_{j=1}^{d}k_j\alpha_j\notin \mathbb{Z}.
\end{equation*}
In the rest of the paper, we always assume $\lambda\neq 0$,  $\alpha=(\alpha_1,\alpha_2,\cdots,\alpha_d)$ is rationally  independent and
$\theta$ satisfies condition (\ref{Contheta}).  We will often use ``all
$\lambda (\alpha,\theta)$'', meaning ``all $\lambda (\alpha,\theta)$ as above''.

Now we start to state some known results.
\begin{theorem}\label{acsp}\cite{bentosela2008spectral,jakvsic1998spectrum}
Let $H_{\lambda,\alpha,\theta}$ and $H^+_{\lambda,\alpha,\theta}$ be given by (\ref{Def.H}) and  (\ref{Def.H^+}) respectively.
Then for all parameters $(\lambda ,\alpha,\theta)$,
$[-c_d,c_d]\subset \sigma_{ac}(H_{\lambda,\alpha,\theta}), \sigma_{ac}(H^{+}_{\lambda,\alpha,\theta})$, and
$H_{\lambda,\alpha,\theta}$,  $H^{+}_{\lambda,\alpha,\theta}$ has purely    absolutely
continuous spectrum on the interval $(-c_d,c_d)$.
\end{theorem}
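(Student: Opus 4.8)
We indicate one way such a statement can be proved; the complete arguments are in \cite{bentosela2008spectral,jakvsic1998spectrum}. The plan is to treat $H$ and $H^+$ as \emph{surface} perturbations of the free Laplacians, so that the problem reduces to a limiting absorption principle (LAP) together with the study of an effective operator on the hyperplane $x=0$. Let $\iota\colon\ell^2(\mathbb{Z}^d)\to\ell^2(\mathbb{Z}^{d+1})$ (respectively $\ell^2(\mathbb{Z}^{d+1}_+)$) be the isometry $(\iota\phi)_{(n,x)}=\delta(x)\,\phi(n)$, and let $A$ be multiplication by $\lambda\tan\pi(\alpha\cdot n+\theta)$ on $\ell^2(\mathbb{Z}^d)$. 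Then $H=H_0+\iota A\iota^*$ and $H^+=H_0^++\iota A\iota^*$, the boundary condition defining $H^+$ being precisely this term. Krein's resolvent formula gives, for $\operatorname{Im}z\neq0$,
\[
(H-z)^{-1}=(H_0-z)^{-1}-(H_0-z)^{-1}\iota\,\bigl(A^{-1}+M(z)\bigr)^{-1}\iota^*(H_0-z)^{-1},\qquad M(z):=\iota^*(H_0-z)^{-1}\iota,
\]
and the same identity with $H_0^+$, $M^+(z)$ for $H^+$; here $M(z)$, $M^+(z)$ are bounded on $\ell^2(\mathbb{Z}^d)$, and if $A$ is not invertible (by rational independence of $\alpha$ this can fail for at most one site) one works directly with $I+AM(z)$. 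Thus it suffices to: (i) have the LAP for $H_0$, $H_0^+$ on $(-c_d,c_d)$ and compute $M$, $M^+$; and (ii) control the boundary values of $(A^{-1}+M(z))^{-1}$ as $z\to E+i0$.

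For (i): the discrete Laplacians $H_0$ on $\mathbb{Z}^{d+1}$ and $H_0^+$ on $\mathbb{Z}^{d+1}_+$ satisfy the weighted-$\ell^2$ LAP on the open interval $(-c_d,c_d)$ (classical; the interior critical values of the symbol are non-degenerate, which is enough). Fourier transforming in the $n$-variable diagonalizes $M(z)$: writing $a(\xi)=2\sum_{j=1}^d\cos2\pi\xi_j\in[-2d,2d]$, a contour integration in the transverse variable gives
\[
M(z)=\mathcal{F}^{-1}m(\xi,z)\mathcal{F},\qquad m(\xi,z)=\int_0^1\frac{\dd\eta}{a(\xi)+2\cos2\pi\eta-z}=-\bigl((z-a(\xi))^2-4\bigr)^{-1/2},
\]
with the branch analytic in the upper half $z$-plane and $\to 0$ as $\operatorname{Im}z\to+\infty$; similarly $m^+(\xi,z)=\tfrac12\bigl(-(z-a(\xi))+\sqrt{(z-a(\xi))^2-4}\bigr)$. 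The key consequence is that for \emph{every} $E\in(-c_d,c_d)$ the open set $\Omega_E:=\{\xi\colon |E-a(\xi)|<2\}$ has positive Lebesgue measure — because $(E-2,E+2)$ meets $(-2d,2d)$ exactly when $E\in(-c_d,c_d)$ — and on $\Omega_E$ one has $\operatorname{Im}m(\xi,E+i0)=(4-(E-a(\xi))^2)^{-1/2}>0$ (and $\operatorname{Im}m^+>0$), while $\operatorname{Im}m=0$ off $\Omega_E$. Hence $\operatorname{Im}M(E+i0)\geq0$ and is nonzero on a nontrivial part of Fourier space for all such $E$; this is what feeds the non-degeneracy of the effective operator.

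Granting (ii) — that $A^{-1}+M(z)$ is boundedly invertible with boundary values (in the weighted sense compatible with the free LAP) as $z\to E+i0$ for every $E\in(-c_d,c_d)$ — the Krein identity propagates the free LAP to $H$ and $H^+$ on $(-c_d,c_d)$, since the outer factors $(H_0-z)^{-1}\iota$, $\iota^*(H_0-z)^{-1}$ and the trace $M(z)$ already have boundary values there. A LAP on an interval rules out both singular continuous spectrum and eigenvalues, so $H$ and $H^+$ have purely absolutely continuous spectrum on $(-c_d,c_d)$. On the other hand $[-c_d,c_d]\subseteq\sigma(H)\cap\sigma(H^+)$: for $E\in(-c_d,c_d)$ a plane wave of $H_0$ at energy $E$, truncated to a large box sitting in the bulk (disjoint from $x=0$), is an approximate eigenfunction of both $H$ and $H^+$ because the surface term vanishes on its support, and the endpoints follow by closedness of the spectrum. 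Combining the two facts, $(-c_d,c_d)\subseteq\sigma_{ac}(H)\cap\sigma_{ac}(H^+)$, and since the absolutely continuous spectrum is closed, $[-c_d,c_d]\subseteq\sigma_{ac}(H)\cap\sigma_{ac}(H^+)$.

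The main obstacle is step (ii). The $\tan$-potential makes $A$ — and also $A^{-1}$, which blows up along the sites where $\alpha\cdot n+\theta$ approaches $\mathbb{Z}$ — unbounded, while $m(\xi,E+i0)$ itself carries square-root singularities at $a(\xi)=E\pm2$, so bounded invertibility of $A^{-1}+M(z)$ up to the real axis is not a soft perturbative statement. Taking imaginary parts of $\langle\phi,(A^{-1}+M(E+i0))\phi\rangle$ and using the positivity of $\operatorname{Im}M(E+i0)$ on $\Omega_E$ shows that any obstruction must be a ``surface resonance'': a nonzero $\phi$ (in the relevant weighted space) with $\widehat\phi$ supported on $\Omega_E^{c}$ and $(A^{-1}+\operatorname{Re}M(E+i0))\phi=0$; if such a $\phi$ were in $\ell^2$ it would even be an embedded eigenvalue. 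Excluding these is where the special structure of the Maryland potential must be exploited — through the Herglotz-type structure of $\tan\pi(\alpha\cdot n+\theta)$ together with the minimal-translation properties coming from $\alpha$ rationally independent and $\theta$ obeying \eqref{Contheta} — and it is the step for which I would rely on the analysis of \cite{jakvsic1998spectrum,bentosela2008spectral}. A conceivably softer alternative is a Mourre estimate for $H$ and $H^+$ with a conjugate operator adapted to the fibre structure (after a similarity transformation absorbing the unboundedness of $\tan$), but the same resonance-type point resurfaces there as the verification that the positive commutator survives the surface term.
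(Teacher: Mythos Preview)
The paper does not supply its own proof of this theorem: it is stated as a known result, with citation to \cite{bentosela2008spectral,jakvsic1998spectrum}, and is used as input to the paper's new results. So there is no in-paper argument to compare against; the relevant question is whether your sketch is a faithful outline of what those references do, and whether it stands on its own.

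Your strategic outline is sound and matches the literature: the Krein--type resolvent identity, the Fourier computation of the trace $M(z)=\Gamma_0(z)$, the observation that $\operatorname{Im}m(\xi,E+i0)>0$ on a set of positive measure for every $E\in(-c_d,c_d)$, and the Weyl-sequence argument for $[-c_d,c_d]\subset\sigma(H)$ are all correct and are exactly the ingredients used in \cite{jakvsic1998spectrum,bentosela2008spectral}. You also correctly locate the only genuine difficulty at step~(ii).

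What is missing --- and you say so yourself --- is the actual mechanism that disposes of step~(ii). Here the Maryland structure is not merely ``Herglotz-type'' heuristics: it yields an \emph{exact algebraic factorization}. Since $A^{-1}=\lambda^{-1}\cot\pi(\alpha\cdot n+\theta)$, one has the Cayley identity
\[
T(z)=(A^{-1}+\Gamma_0(z))^{-1}=\lambda\,(I-\delta U)\,(I-\delta C(z)U)^{-1}\,(\lambda\Gamma_0(z)-iI)^{-1},
\qquad C(z)=(\lambda\Gamma_0(z)-iI)^{-1}(\lambda\Gamma_0(z)+iI),
\]
with $\delta=e^{2\pi i\theta}$ and $U$ the shift (this is precisely the content of the paper's Theorem~\ref{Thm.Greenfunction1}). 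For $\operatorname{Im}z\cdot\lambda<0$ one has $\|C(z)\|<1$, so $T(z)$ is manifestly bounded; the analysis of boundary values as $z\to E+i0$ with $E\in(-c_d,c_d)$ then reduces to the behaviour of the \emph{explicit} multiplier $\hat C(\xi,z)$, which stays a strict contraction on the set $\Omega_E$ where $\operatorname{Im}\hat\Gamma_0>0$. This is what replaces the abstract ``exclude surface resonances'' step you flag, and it is the reason no Mourre machinery is needed. Without invoking this factorization, your sketch remains a program rather than a proof: the unboundedness of both $A$ and $A^{-1}$ genuinely blocks any soft argument, as you note.

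One minor point: your assertion that a weighted-$\ell^2$ LAP on an interval automatically excludes embedded eigenvalues is correct only once you have the LAP for the \emph{full} operator $H$, which again presupposes step~(ii); it is not a free consequence of the LAP for $H_0$ alone.
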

\par
 We say $\alpha  \in \mathbb{R}^d $ satisfies a Diophantine condition if there exists  $\kappa>0$ and $\tau>0$,
such that
 \begin{equation}\label{Def.DC}
  ||k\cdot\alpha||_{\mathbb{R}/\mathbb{Z}}>\kappa ||k||^{-\tau}  \text{ for   any } k\in \mathbb{Z}^d\backslash \{0\},
 \end{equation}
where $ ||x||_{\mathbb{R}/\mathbb{Z}}=\min_j|x-j|$.

\par
It is to be believed that $H$ and $H^+$ have purely singular spectra outside the set $[-c_d,c_d]$.
Khoruzenko and Pastur \cite{khoruzhenko1997localization} have proved this
for $\alpha$ satisfying Diophantine condition.
\begin{theorem}\label{ThKP}\cite{khoruzhenko1997localization}
Assume frequency $\alpha$ satisfies Diophantine condition (\ref{Def.DC}). Then $\sigma(H_{\lambda,\alpha,\theta})=(-\infty,\infty)$ and
 $H_{\lambda,\alpha,\theta}$ only  has pure point on $(-\infty,\infty)\backslash[-c_d,c_d]$.
 On this set, the eigenvalues are simple and the
corresponding eigenfunctions decay exponentially.
 These  properties  also apply  to the
operator  $H_{\lambda,\alpha,\theta}^+$.
\end{theorem}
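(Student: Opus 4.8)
The plan is to follow the method of \cite{khoruzhenko1997localization}: reduce the eigenvalue problem for $H=H_{\lambda,\alpha,\theta}$ at energies outside $[-c_d,c_d]$ to a \emph{generalized Maryland model} on the surface $\ell^2(\mathbb{Z}^d)$, and then exploit the exact solvability of the Maryland model.

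First I would perform a Schur (Krein) reduction. Write $H=H_0+PvP$, where $P$ is the orthogonal projection of $\ell^2(\mathbb{Z}^{d+1})$ onto the surface $\mathbb{Z}^d\times\{0\}\cong\ell^2(\mathbb{Z}^d)$ and $v$ denotes multiplication by $v_n=\lambda\tan\pi(\alpha\cdot n+\theta)$. For real $E$ with $|E|>c_d$ we have $E\notin\sigma(H_0)=[-c_d,c_d]$, and the Combes--Thomas estimate shows that the kernel of $(H_0-E)^{-1}$ decays exponentially off the diagonal. A vector $\psi\in\ell^2(\mathbb{Z}^{d+1})$ satisfies $H\psi=E\psi$ if and only if its surface trace $\phi:=P\psi$ satisfies
\begin{equation*}
\bigl(I+g(E)v\bigr)\phi=0,\qquad g(E):=P(H_0-E)^{-1}P,
\end{equation*}
and then $\psi=-(H_0-E)^{-1}(Pv\phi)$ (the equation forcing $v\phi\in\ell^2(\mathbb{Z}^d)$, so $\psi\in\ell^2(\mathbb{Z}^{d+1})$); this correspondence is a bijection between the two solution spaces, and, via the exponentially decaying kernel of $(H_0-E)^{-1}$, it carries exponential decay of $\phi$ to exponential decay of $\psi$ and preserves multiplicities. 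A Fourier transform in $n$ diagonalizes $g(E)$: its symbol $\hat g(\cdot,E)$ is the fibre free Green's function at coinciding points, a real-analytic, nowhere-vanishing function on $\mathbb{T}^d$ proportional to $[(E-a(p))^2-4]^{-1/2}$, where $a(p)=2\sum_{j=1}^d\cos2\pi p_j\in[-2d,2d]$; indeed $|E|>c_d=2(d+1)$ forces $|E-a(p)|>2$ for all $p$ with a uniform margin, so $\hat g(\cdot,E)$ extends analytically and non-vanishingly to a strip $|\operatorname{Im}p|<\rho(E)$. Hence $g(E)$ is boundedly invertible and $M(E):=g(E)^{-1}$ is convolution by a kernel with exponential off-diagonal decay, so the surface equation becomes
\begin{equation*}
\bigl(M(E)+\lambda\tan\pi(\alpha\cdot n+\theta)\bigr)\phi=0\quad\text{on }\ell^2(\mathbb{Z}^d),
\end{equation*}
a Maryland model in which the nearest-neighbour hopping is replaced by the exponentially localized convolution $M(E)$.

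Next I would solve this equation by the Maryland transformation (in the spirit of Figotin--Pastur). Using $\lambda\tan\pi(\alpha\cdot n+\theta)=-i\lambda+2i\lambda\,(1+e^{2\pi i\theta}e^{2\pi i\alpha\cdot n})^{-1}$ and passing to the Aubry dual $\ell^2(\mathbb{Z}^d)\to L^2(\mathbb{T}^d)$, the convolution $M(E)$ becomes multiplication by its symbol $m(\cdot,E)$ and multiplication by $e^{2\pi i\alpha\cdot n}$ becomes the shift $S_\alpha:u(x)\mapsto u(x-\alpha)$; clearing the factor $1+e^{2\pi i\theta}S_\alpha$ turns the equation into a first-order linear recursion for the dual function $u$ along each (dense) orbit $\{x+k\alpha:k\in\mathbb{Z}\}$, which is solved explicitly as a telescoping product of ratios $\tfrac{m(x+k\alpha,E)+i\lambda}{m(x+k\alpha,E)-i\lambda}$. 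Reading off this explicit solution shows, for the surface equation on $\{|E|>c_d\}$: the exceptional set of energies has Lebesgue measure zero; off it the surface equation has an $\ell^2$ solution, unique up to a scalar, which --- invoking the Diophantine condition (\ref{Def.DC}) to bound the small-divisor products in $||k\cdot\alpha||_{\mathbb{R}/\mathbb{Z}}$ --- decays exponentially; and these energies are dense in $\{|E|>c_d\}$. Transporting this back through the bijection of the previous step, $H$ has pure point spectrum on $(-\infty,\infty)\setminus[-c_d,c_d]$ with simple, exponentially decaying eigenfunctions, and its eigenvalues are dense there; combined with Theorem~\ref{acsp} (giving $[-c_d,c_d]\subset\sigma_{ac}(H)\subset\sigma(H)$) and the closedness of the spectrum, this yields $\sigma(H)=(-\infty,\infty)$. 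The half-line operator $H^+_{\lambda,\alpha,\theta}$ is treated in the same way, the only change being that the fibre Green's function is now the Dirichlet one, $-z(p,E)$ with $z+z^{-1}=E-a(p)$ and $|z|<1$, which is again real-analytic and nowhere vanishing for $|E|>c_d$.

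The main obstacle is this middle step, and within it the two features that a naive computation would gloss over: the hopping $M(E)$ is long-range (only exponentially localized, not finite-range), and the potential $\tan\pi(\alpha\cdot n+\theta)$ is unbounded, so the Maryland transformation and the manipulations with $(1+e^{2\pi i\theta}S_\alpha)^{-1}$ must be justified on suitably weighted spaces --- most conveniently by first taking $\operatorname{Im}\theta\neq0$, where $v$ is a bounded multiplication operator and $H_\theta$ is a holomorphic family of bounded operators, carrying out the construction there, and then passing to the real-$\theta$ boundary value. The Diophantine hypothesis on $\alpha$ enters precisely to make the small-divisor series converge geometrically; removing it, keeping only rational independence, is exactly the difficulty taken up in the rest of the paper.
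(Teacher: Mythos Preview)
This theorem is not proved in the present paper: it is quoted as a known result from \cite{khoruzhenko1997localization}, so there is no ``paper's own proof'' to compare against. Your outline is a faithful sketch of the Khoruzhenko--Pastur argument and is essentially correct. It is worth noting that the machinery you describe is in fact redeveloped in this paper for other purposes: the Schur/Green-function reduction is exactly \S2 (equations (\ref{Resolventidentity})--(\ref{Resolventidentity2})), the bijection between polynomially bounded solutions of $H\psi=E\psi$ and of the surface equation $v(n)^{-1}\varphi_n+\sum_\eta\lambda\Gamma_0(n-\eta)\varphi_\eta=0$ is Theorem~\ref{Thm.Anotherform}, and your ``Maryland transformation'' is the Cayley transform of Lemma~\ref{TRF1} leading to the cohomological equation (\ref{TRF3}). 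Two small corrections of terminology: what you call the ``Aubry dual'' here is simply the Fourier transform on $\ell^2(\mathbb{Z}^d)$ (there is no genuine duality with a second operator), and the resulting equation is a scalar cohomological equation $q(y)\hat c(y)=e^{-2\pi i\theta}\hat c(y+\alpha)$ rather than a recursion solved by telescoping products --- under the Diophantine hypothesis one solves it by writing $q(y)=e^{-2\pi i\zeta(y)}$ and finding an analytic $w$ with $w(y+\alpha)-w(y)=\zeta(y)-\zeta_0$, the small divisors $\|k\cdot\alpha\|_{\mathbb{R}/\mathbb{Z}}^{-1}$ being controlled by (\ref{Def.DC}). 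The paper's own contribution (Theorems~\ref{Thspectra} and~\ref{Thac}) is precisely to handle the non-Diophantine case where this last step fails.
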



\par
For the surface Maryland model, one of the basic ideas
is to integrate the $x-$variable
and to reduce the $d+1-$dimensional spectral problem to a $d$-dimensional problem. This    technology has been heavily developed by use of
Green function in \cite{jaksic1998surface,jakvsic1998spectrum,jakvsic1999localization,khoruzhenko1997localization}.
This  technology can also be extended to
the generalized surface Maryland model
\cite{bentosela2008spectral,bentosela2005spectral}. In this paper, we also use this basic  technology to analyze the spectra of  surface Maryland model.
However, we  bring a new lighting on the subject.

Our first result is to obtain the explicit formula of  the spectra of $H_{\lambda,\alpha,\theta}$ and   $H_{\lambda,\alpha,\theta}^+$.
\begin{theorem}\label{Thspectra}
Suppose $\alpha$ is rationally independent and $\lambda\neq 0$, then
$\sigma (H_{\lambda,\alpha,\theta})=\sigma(H_{\lambda,\alpha,\theta}^+)=(-\infty,+\infty)$  for all $\theta$.
\end{theorem}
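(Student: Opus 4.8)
\emph{Plan of proof.} I would combine Theorem~\ref{acsp}, which already gives $[-c_d,c_d]\subseteq\sigma(H_{\lambda,\alpha,\theta})$ and $[-c_d,c_d]\subseteq\sigma(H^+_{\lambda,\alpha,\theta})$, with a reduction of the spectral problem on $\RR\setminus[-c_d,c_d]$ to a \emph{Maryland-type operator on the surface $\mathbb{Z}^d$}, to which the Maryland-model spectral machinery applies. Two preliminary remarks simplify the bookkeeping. First, since $\alpha$ is rationally independent, the orbit $\{\theta+\alpha\cdot m \bmod 1:m\in\mathbb{Z}^d\}$ is dense in $\mathbb{T}$; the operators $H_{\lambda,\alpha,\theta}$ and $H_{\lambda,\alpha,\theta+\alpha\cdot m}$ are unitarily equivalent via the shift by $m$ in the $n$-variable; and $\theta\mapsto H_{\lambda,\alpha,\theta}$ is strong-resolvent continuous on the set of admissible phases (condition~\eqref{Contheta}). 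Hence $\sigma(H_{\lambda,\alpha,\theta})$ is the same for all admissible $\theta$, and likewise for $H^+_{\lambda,\alpha,\theta}$, so it suffices to exhibit, for each $E$ with $|E|>c_d$, \emph{one} admissible $\theta$ with $E\in\sigma(H_{\lambda,\alpha,\theta})$ (and similarly for $H^+$).

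Second, I would integrate out the $x$-variable. For $|E|>c_d$ one has $E\notin\sigma(H_0)$, and any solution of $H_{\lambda,\alpha,\theta}\psi=E\psi$ that is $\ell^2$ in $x$ satisfies, after a Fourier transform in the surface variable, $\widehat\psi(\xi,x)=\rho_E(\xi)^{|x|}\widehat\psi(\xi,0)$, where $\rho_E(\xi)\in(-1,1)\setminus\{0\}$ is the root of $\rho+\rho^{-1}=E-2\sum_{j=1}^{d}\cos2\pi\xi_j$. Substituting into the equation at $x=0$ collapses the problem to $\ell^2(\mathbb{Z}^d)$ and yields
\begin{equation*}
E\in\sigma(H_{\lambda,\alpha,\theta})\iff 0\in\sigma\bigl(A_\theta(E)\bigr),\qquad A_\theta(E):=\lambda\tan\pi(\alpha\cdot n+\theta)-T(E),
\end{equation*}
where $T(E)$ is convolution on $\ell^2(\mathbb{Z}^d)$ with the bounded real-analytic Fourier multiplier $\widehat t_E(\xi)=\rho_E(\xi)^{-1}-\rho_E(\xi)$ (a one-sided variant, with a different bounded real-analytic multiplier built from $\rho_E$, gives the analogous statement for $H^+_{\lambda,\alpha,\theta}$). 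This is the Green-function computation of \cite{jaksic1998surface,jakvsic1998spectrum,khoruzhenko1997localization}; for the spectrum, rather than for eigenvalues, one replaces exact $\ell^2$-solutions by Weyl sequences.

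Now $A_\theta(E)$ is precisely a \emph{Maryland-type operator}: an unbounded $\tan$ potential plus a short-range (exponentially localized) convolution ``kinetic'' term. I would run the Maryland transformation on it: using $\tan\pi y=i(1-e^{2\pi iy})/(1+e^{2\pi iy})$, multiply $(A_\theta(E)-s)\phi=0$ by the factor $1+e^{2\pi i(\alpha\cdot n+\theta)}$ (nonzero by~\eqref{Contheta}) and Fourier-transform in $n$; the multiplication operator $e^{2\pi i\alpha\cdot n}$ becomes the rotation $\xi\mapsto\xi-\alpha$ on $\mathbb{T}^d$, and the equation turns into a first-order cocycle over this rotation whose transfer acts by a \emph{unimodular} weighted composition, since $|\lambda i\pm\widehat t_E(\xi)|^2=\lambda^2+\widehat t_E(\xi)^2$ telescopes; equivalently, $\sigma(A_\theta(E))$ is governed by a unitary weighted composition operator $\mathcal U_{\theta,E}=e^{2\pi i\theta}\mathcal U_{0,E}$ on $L^2(\mathbb{T}^d)$. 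As $\{A_\theta(E)\}_\theta$ is an ergodic family over the minimal uniquely ergodic rotation $\theta\mapsto\theta+\alpha\cdot m$, it has an integrated density of states $N_E(s)$ with $\sigma(A_\theta(E))=\operatorname{supp}dN_E$ independent of admissible $\theta$, and the standard Maryland computation (cf.\ \cite{grempel1982localization,liulana1}) yields the \emph{explicit} formula, of the shape $N_E(s)=\int_{\mathbb{T}^d}\bigl(\tfrac12+\tfrac1\pi\arctan\tfrac{s-\widehat t_E(\xi)}{\lambda}\bigr)\,d\xi$, which is real-analytic and strictly increasing in $s$ on all of $\RR$ (the ``free'' density is Cauchy, hence everywhere positive, and integrating in $\xi$ preserves positivity). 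Thus $0\in\operatorname{supp}dN_E$, i.e.\ $0\in\sigma(A_\theta(E))$, i.e.\ $E\in\sigma(H_{\lambda,\alpha,\theta})$ for every admissible $\theta$; the half-line case is identical. Together with Theorem~\ref{acsp} this gives $\sigma(H_{\lambda,\alpha,\theta})=\sigma(H^+_{\lambda,\alpha,\theta})=(-\infty,\infty)$.

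The main obstacle is the factor $1+e^{2\pi i(\alpha\cdot n+\theta)}$: it is an unbounded multiplication operator whose inverse is also unbounded, because by rational independence $\alpha\cdot n+\theta$ accumulates at $\tfrac12\bmod1$, so the factor accumulates at $0$. Multiplying by it preserves kernels (one may divide back pointwise), so the \emph{eigenvalue} analysis transfers cleanly: when the relevant cohomological equation over the rotation is exactly solvable — e.g.\ for Diophantine $\alpha$, cf.\ Theorem~\ref{ThKP} — one obtains genuine $\ell^2$ eigenfunctions, with eigenvalues of $H_{\lambda,\alpha,\theta}$ dense in $(-\infty,-c_d)\cup(c_d,\infty)$ as $\theta$ varies, and the $\theta$-independence of the spectrum already closes the argument for such $\alpha$. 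But for the spectrum (and the density of states) of $A_\theta(E)$ itself, and for very Liouville $\alpha$, one must produce Weyl sequences for $A_\theta(E)$ rather than for the modified operator — approximate eigenfunctions supported on the asymptotically full-density set of $n$ where $|1+e^{2\pi i(\alpha\cdot n+\theta)}|$ is bounded below, with the truncation error controlled by the exponential decay of the kernel of $T(E)$. Carrying the Maryland spectral theory through uniformly in the phase — equivalently, arranging the Weyl sequences to avoid the resonant sites — is where I expect the real work to lie, and it is exactly what ``for all phases'' in the title refers to.
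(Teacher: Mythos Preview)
Your outline is reasonable and the reduction to the surface Maryland-type operator $A_\theta(E)$ is exactly the one used in the paper, but from that point on the two arguments diverge sharply, and the paper's route sidesteps precisely the obstacle you single out at the end.

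You propose to attack $0\in\sigma(A_\theta(E))$ \emph{directly for every} $\alpha$, via the Maryland IDS formula and, in the Liouville regime, by constructing Weyl sequences that avoid the resonant sites where $1+e^{2\pi i(\alpha\cdot n+\theta)}$ is small. You correctly identify this as ``where the real work lies,'' and you do not actually carry it out; as written, the proposal is a plan whose hardest step is left open.

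The paper never does that work. After establishing, as you do, that $\sigma(H_{\lambda,\alpha,\theta})$ is $\theta$-independent (via the explicit resolvent formula $T=\lambda(1-\delta U)(I-\delta CU)^{-1}(\lambda\Gamma_0-iI)^{-1}$ with $\|C\|<1$, which gives norm-resolvent continuity in $\theta$), the paper observes that expanding $(I-\delta CU)^{-1}=\sum_{k\ge 0}\delta^k(CU)^k$ and integrating in $\theta\in\mathbb{T}$ kills every term with $k\ge 1$. Hence the $\theta$-average of the Green function $\mathbb{E}_\theta\{G\}$ is \emph{independent of $\alpha$}. Now argue by contradiction: if for some rationally independent $\alpha_1$ an interval $I$ lay in the resolvent set, then by $\theta$-independence $\mu_{\theta,\alpha_1,y}(I)=0$ for all $\theta$, so $\int_{\mathbb{T}}\mu_{\theta,\alpha_1,y}(I)\,d\theta=0$; by $\alpha$-independence of the average, $\int_{\mathbb{T}}\mu_{\theta,\alpha,y}(I)\,d\theta=0$ for every $\alpha$; by $\theta$-independence once more, $I$ is a spectral gap for \emph{every} $\alpha$ --- contradicting the Diophantine case (Theorem~\ref{ThKP}), which is already known.

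So the paper reduces the general $\alpha$ to the Diophantine $\alpha$ by an averaging trick, rather than by building Weyl sequences in the Liouville regime. What your approach would buy, if completed, is a self-contained proof that does not invoke Theorem~\ref{ThKP} and that yields the explicit IDS formula $N_E(s)=\int_{\mathbb{T}^d}\bigl(\tfrac12+\tfrac1\pi\arctan\tfrac{s-\widehat t_E(\xi)}{\lambda}\bigr)\,d\xi$ along the way; what the paper's approach buys is a two-line argument once the resolvent formula is in hand, with no Liouville analysis at all.
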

{\bf Remark}
From Theorems \ref{acsp} and \ref{ThKP}, we directly  know that Theorem \ref{Thspectra} holds for  Diophantine $\alpha$.
Diophantine condition is necessary for Theorem \ref{ThKP} when we solve the cohomological equation.
Our aim is to  prove the   Theorem \ref{Thspectra}   for  non-Diophantine $\alpha$.

Secondly we study the  absolutely continuous spectrum.  By  Theorem \ref{acsp},  $H$ ($H^+$) has purely    absolutely
continuous spectrum in  the interval $(-c_d,c_d)$, it was  not known if  $H$ ($H^+$) has some    absolutely continuous component
 in the remaining interval $\mathbb{R}\backslash [-c_d,c_d]$. It is believed that $H$ ($H^+$) has no     absolutely
continuous component  in $\mathbb{R}\backslash [-c_d,c_d]$.
  In this paper, we solve the problem for $d=1$. We construct an arithmetically defined measure zero set (actually  Hausdorff dimension zero) that supports
spectral  measures in the regime $\mathbb{R}\backslash [-c_d,c_d]$, which directly implies the following Theorem.
\begin{theorem}\label{Thac}
Let $H_{\lambda,\alpha,\theta}$ and   $H_{\lambda,\alpha,\theta}^+$ be given by (\ref{Def.H}) and (\ref{Def.H^+}) respectively with $d=1$.
Suppose $\alpha\in \mathbb{R}\backslash \mathbb{Q}$, then
$\sigma_{ac} (H_{\lambda,\alpha,\theta})=\sigma_{ac}(H_{\lambda,\alpha,\theta}^+)=[-4,4]$
and
$H_{\lambda,\alpha,\theta}$,  $H_{\lambda,\alpha,\theta}^+$ has purely absolutely continuous spectra on $(-4,4) $ for all $\theta$.
Furthermore, the spectral measures of   both operators are of  Hausdorff dimension one restricted  to $(-4,4)$ and  of Hausdorff dimension zero restricted to $\mathbb{R}\backslash [-4,4]$ for all $\theta$.

\end{theorem}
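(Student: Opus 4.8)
The plan is to reduce the $(d{+}1)$-dimensional problem with $d=1$ to a one-dimensional spectral problem on the surface lattice $\mathbb{Z}$, where the effective operator turns out to be (a rank-type perturbation related to) the one-dimensional Maryland model $h_{\lambda,\alpha,\theta}$, and then to transport the known fine spectral analysis of the one-dimensional Maryland model from \cite{liulana1} through this reduction. Concretely, using the partial Fourier transform in the $x$-variable (for $H$) or the half-line transfer/Weyl function in the $x$-variable (for $H^{+}$), one integrates out the bulk direction: for a spectral parameter $z=E+i\varepsilon$ with $\varepsilon>0$ and fixed momentum $p$ in the $x$-direction, the free Green's function restricted to the hyperplane $x=0$ is a multiplication operator $g(z,\cdot)$ by the known explicit symbol (an arccos-type expression), and the surface equation becomes
\begin{equation*}
\bigl( \Delta_{\mathbb{Z}} + \lambda\,\tan\pi(\alpha n+\theta)\, \mathrm{diag} - g(z)^{-1}\bigr)\,\phi = 0 ,
\end{equation*}
i.e. an eigenvalue problem for the one-dimensional Maryland operator $h_{\lambda,\alpha,\theta}$ at the shifted, $z$-dependent energy $\mathcal{E}(z)$ determined by $g$. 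The first step is therefore to make this equivalence precise as a statement about spectral measures: the spectral measure of $H$ (resp.\ $H^{+}$) associated to surface-supported vectors is, up to the explicit change of variables $z\mapsto\mathcal{E}(z)$ and the smooth weight coming from $g$, the spectral measure of $h_{\lambda,\alpha,\theta}$; and the bulk part contributes only the absolutely continuous band $[-c_d,c_d]=[-4,4]$, which is already handled by Theorem~\ref{acsp}.

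The second step is the analysis of the change of variables $\mathcal{E}(z)$ outside the band. On $\mathbb{R}\setminus[-4,4]$ the free half-space/whole-space Green's function on the hyperplane is real-analytic and strictly monotone in $E$, with nonvanishing derivative, so $E\mapsto\mathcal{E}(E)$ is a real-analytic diffeomorphism from each component of $\mathbb{R}\setminus[-4,4]$ onto a subinterval of the spectral axis of the one-dimensional Maryland model. A bi-Lipschitz-on-compacts (indeed real-analytic with nonzero derivative) change of variables preserves Hausdorff dimension of sets and the local Hausdorff dimension of measures; likewise it preserves absolute continuity, singularity and pure pointness of measures. Hence the spectral type and the Hausdorff dimension of the spectral measure of $H$ (resp.\ $H^{+}$) on $\mathbb{R}\setminus[-4,4]$ are exactly those of the one-dimensional Maryland model $h_{\lambda,\alpha,\theta}$ on the corresponding set of energies.

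The third step invokes \cite{liulana1}: for $d=1$ and any irrational $\alpha$ and admissible $\theta$, the one-dimensional Maryland model has spectral measures of Hausdorff dimension zero (in fact its spectral measures are supported on an explicit arithmetically defined set of Hausdorff dimension zero), and in particular it has no absolutely continuous component. Pulling this back through $\mathcal{E}^{-1}$ gives, via Step 2, that the spectral measures of $H$ and $H^{+}$ restricted to $\mathbb{R}\setminus[-4,4]$ are of Hausdorff dimension zero and purely singular; combined with Theorem~\ref{acsp} (which gives purely absolutely continuous spectrum, hence Hausdorff dimension one, on $(-4,4)$, and $[-4,4]\subset\sigma_{ac}$) and with Theorem~\ref{Thspectra} (which gives $\sigma=(-\infty,\infty)$, so there genuinely is spectrum outside the band to account for), this yields all assertions: $\sigma_{ac}=[-4,4]$, purely a.c.\ on $(-4,4)$, Hausdorff dimension one there and dimension zero off $[-4,4]$. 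I expect the main obstacle to be Step~1 — setting up the reduction so that it is a genuine unitary (or at least measure-class-preserving) correspondence of spectral measures valid for \emph{all} $\theta$ and for non-Diophantine $\alpha$, controlling the boundary values as $\varepsilon\downarrow 0$ of the combined Green's function uniformly enough that the Hausdorff-dimensional content is faithfully transported; the edge points $E=\pm 4$, where $g$ degenerates and $\mathcal{E}'$ may blow up, will need separate care, but since they form a two-point set they cannot affect either the a.c.\ spectrum or the Hausdorff dimension statements.
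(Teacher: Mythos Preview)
Your Step~1 contains a genuine gap. The surface reduction (Theorem~\ref{Thm.Anotherform}) yields the equation $v^{-1}\varphi + \lambda\Gamma_0\varphi = 0$, where $\Gamma_0$ is the restriction to $x=0$ of the free $(d{+}1)$-dimensional resolvent. For $d=1$ and, say, $E>4$ its momentum-space symbol is
\[
\hat\Gamma_0(y)=\int_{\mathbb{T}}\frac{dt}{2\cos 2\pi y+2\cos 2\pi t-E}
=-\bigl((E-2\cos 2\pi y)^2-4\bigr)^{-1/2},
\]
a \emph{nonlinear} function of the symbol $2\cos 2\pi y$ of the surface Laplacian (and similarly $\hat\Gamma_0^+(y)=-r(y,E)$ for $H^+$). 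Hence the reduced operator is not of the form $\Delta_{\mathbb Z}+\lambda v-\mathcal E(E)$ for any scalar $\mathcal E(E)$; there is no energy map $E\mapsto\mathcal E(E)$ that turns the surface problem into a spectral problem for the standard Maryland model $h_{\lambda,\alpha,\theta}$. Your Steps~2 and~3, which rest on pulling back spectral type and Hausdorff dimension through such a scalar bi-Lipschitz change of variable, therefore do not apply.

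What the paper does instead is to apply the Cayley transform directly to $v^{-1}+\lambda\Gamma_0$ (Lemma~\ref{TRF1}), obtaining the cohomological equation $e^{-2\pi i\zeta(y)}\hat c(y)=e^{-2\pi i\theta}\hat c(y+\alpha)$ with $\zeta$ the argument of $(1-i\lambda\hat\Gamma_0)/(1+i\lambda\hat\Gamma_0)$, \emph{not} the argument of the Cayley transform of $2\cos 2\pi y-E$ as in the Maryland model. The dimension-zero conclusion then comes not from citing spectral results for $h_{\lambda,\alpha,\theta}$ but from two ergodic-sum lemmas for general analytic $\zeta$ (Lemmas~\ref{Le} and~\ref{Le1}), which adapt the \emph{methods} of \cite{liulana1}: iterating the cohomological equation along suitable return times forces $\zeta_0(E):=\int_{\mathbb T}\zeta(y)\,dy$ to lie in $\theta+\alpha\mathbb Z+\mathbb Z$ (when $\beta(\alpha)=0$) or to satisfy $\|q_{n_k}(\zeta_0(E)-\theta)\|_{\mathbb R/\mathbb Z}\le e^{-\bar\rho q_{n_k}/2}$ (when $\beta(\alpha)>0$). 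Either condition defines a set of Hausdorff dimension zero, and composing with the nonconstant analytic map $E\mapsto\zeta_0(E)$ shows the Sch'nol support $B$ has Hausdorff dimension zero. In short, the right object to transport is $\zeta_0$, not a putative shifted energy $\mathcal E$.
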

\begin{remark}\label{revisere}
The fact
that the  spectral measures restricted  to $(-4,4)$ are of Hausdorff dimension one  is trivial because $H_{\lambda,\alpha,\theta}$ and $H_{\lambda,\alpha,\theta}^+$ have purely absolutely continuous spectra on $(-4,4) $.
\end{remark}
 We will  prove  Theorems \ref{Thspectra},\ref{Thac}  in two steps. Firstly,   we will reduce the $d+1$-dimensional spectral problem  to a $d$-dimensional problem in \S 2. Secondly, we will study the $d$-dimensional problem  using  the ideas in paper \cite{liulana1} (\S3 and \S4), in which the authors deal with the Maryland model.


\section{Green   function and Dimension Reduction}
Because our arguments are almost the same for the operators $H$ and $H^+$,   we concentrate on $H$ in  the following discussion,  and only point out the difference for $H^+$.

In this section, we carry out the dimension reduction for the surface Maryland model, and lay the ground for our main results.
We first consider the operator $H$.  We should point out that all the contents in this section   are   from \cite{bentosela2008spectral,jakvsic1998spectrum,khoruzhenko1997localization}. For convenience, we rewrote it here.
\par
Denote by $G(X_1;X_2)$ and $G_0(X_1-X_2)$ the Green  functions of the operators $H$ and $H_0$ respectively, i.e., the infinite dimensional matrices
$(H-zI)^{-1}$ and $(H_0-zI)^{-1}$ with $\Im z\neq 0$.  Then, using  the resolvent identity,
one has
\begin{equation}\label{Resolventidentity}
     G(X_1;X_2)=G_0(X_1-X_2)-\sum_{\eta\in \mathbb{Z}^d}G_0(n_1-\eta,x_1)\lambda v(\eta) G(\eta,0;X_2),
\end{equation}
where $X_1=(n_1,x_1)\in \mathbb{Z}^{d+1}$, $X_2=(n_2,x_2)\in \mathbb{Z}^{d+1}$ and $v(\eta)=  \tan\pi(\alpha\cdot \eta+\theta)$.
For simpleness, denote by $v$ ($v^{-1}$) be the  multiplication operator by $v(\eta)$ ($1/v(\eta)$) on $\ell^2(\mathbb{Z}^d)$.
\par
Set $x_1=0$ in (\ref{Resolventidentity}), we find that  the function $ g_{x_2}(n_1;n_2)=G(n_1,0;n_2,x_2)$ satisfies
\begin{equation}\label{Resolventidentity1}
     g_{x_2}(n_1;n_2)=g_{0,x_2}(n_1-n_2)-\sum_{\eta\in \mathbb{Z}^d}\Gamma_0(n_1-\eta )\lambda v(\eta) g_{x_2}(\eta;n_2),
\end{equation}
where $ g_{0,x_2}(\eta)=G_0(\eta,-x_2)$ and $ \Gamma_0(\eta)=G_0( \eta,0)$ with $\eta\in \mathbb{Z}^d$. Next we always regard  $\Gamma_0(\eta)$ as a operator
on $\mathbb{Z}^d$.
\par
Fix $x_2$,  equation (\ref{Resolventidentity1}) becomes a problem in dimension $d$, and has a symbolic
solution $g_{x_2}=(I+\Gamma_0\lambda v)^{-1}g_{0,x_2} $. Combining with (\ref{Resolventidentity}), we have
\begin{eqnarray}
\nonumber
   G(X_1;X_2) &=&  G(n_1,x_1;n_2,x_2)\\
    &=&  G_0(X_1-X_2)-\sum_{\eta_1,\eta_2\in \mathbb{Z}^d} G_0(n_1-\eta_1,x_1)T(\eta_1,\eta_2)G_0(\eta_2-n_2,-x_2),\label{Resolventidentity2}
\end{eqnarray}
where $T=\lambda v(I+\Gamma_0\lambda v)^{-1}=(\lambda ^{-1}v^{-1}+\Gamma_0)^{-1}$.
\par
Notice that  the above derivation of equation (\ref{Resolventidentity2})  is only symbolic, because generally, the  operator
  $(I+\Gamma_0\lambda v)^{-1}$
is  not well defined. However, the following theorem shows (\ref{Resolventidentity2}) is true in fact.
 \begin{theorem}\label{Thm.Greenfunction1}(p.116-p.118, \cite{khoruzhenko1997localization})
Let $U$  denote the multiplication by $e^{2\pi i \alpha \cdot n}$ on $\ell^2(\mathbb{Z}^d)$ and $\delta=e^{2\pi i\theta}$.
If $\Im z \cdot \lambda<0$,
then equation (\ref{Resolventidentity2}) holds, where  $T$ can be written as
\begin{equation}\label{Def.T}
    T=\lambda (1-\delta U) (I-\delta CU)^{-1}(\lambda \Gamma_0-i I)^{-1},
\end{equation}
where
\begin{equation*}
    C=(\lambda \Gamma_0-i I)^{-1}(\lambda  \Gamma_0+i I).
\end{equation*}

\end{theorem}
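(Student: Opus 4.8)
First I would note that \eqref{Resolventidentity1} is already rigorous, so it suffices to show that
\[
  f:=\lambda v\,g_{x_2}(\cdot;n_2)=T\big[g_{0,x_2}(\cdot-n_2)\big]
\]
with $T$ as in \eqref{Def.T} (this in particular forces the inverses appearing in \eqref{Def.T} to make sense): substituting back into \eqref{Resolventidentity} then yields \eqref{Resolventidentity2} for all $x_1$, the double sum converging by Cauchy--Schwarz because $T$ is bounded and the relevant rows and columns of $(H_0-z)^{-1}$ lie in $\ell^2$. That $f\in\ell^2(\mathbb{Z}^d)$ is immediate: if $\psi$ is in the domain of the self-adjoint operator $H$ then $V\psi=H\psi-H_0\psi\in\ell^2$ since $H_0$ is bounded, and taking $\psi=(H-z)^{-1}\delta_{(n_2,x_2)}$ and restricting to $x=0$ gives precisely $f$; in particular $g_{x_2}(\cdot;n_2)\in\operatorname{dom}(v)$.

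The device for handling the unboundedness of $v$ is conjugation by the bounded operator $\delta U+I$. From $\tan\pi t=-i(e^{2\pi i t}-1)/(e^{2\pi i t}+1)$ and condition \eqref{Contheta} (which says $\delta e^{2\pi i\alpha\cdot n}+1\neq 0$ for every $n$) one has the identity of multiplication operators $v=-i(\delta U-I)(\delta U+I)^{-1}$, and a one-line estimate — distinguishing $\abs{\delta e^{2\pi i\alpha\cdot n}+1}\geq 1$ from $\abs{\delta e^{2\pi i\alpha\cdot n}+1}<1$ (in the latter case $\abs{\delta e^{2\pi i\alpha\cdot n}-1}>1$) — shows that $\operatorname{dom}(v)=\operatorname{Ran}(\delta U+I)$. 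Hence $h:=(\delta U+I)^{-1}g_{x_2}(\cdot;n_2)\in\ell^2$, $g_{x_2}(\cdot;n_2)=(\delta U+I)h$, and $f=-i\lambda(\delta U-I)h$. Plugging these into \eqref{Resolventidentity1} turns it into the \emph{bounded} linear equation
\[
  M h=g_{0,x_2}(\cdot-n_2),\qquad M:=(\delta U+I)-i\lambda\Gamma_0(\delta U-I),
\]
and a direct regrouping (using only that $\delta$ is a scalar), together with $I\pm i\lambda\Gamma_0=\pm i(\lambda\Gamma_0\mp iI)$, gives the factorization $M=i(\lambda\Gamma_0-iI)(I-\delta CU)$ with $C$ exactly as in the statement.

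It remains to invert $M$ when $\Im z\cdot\lambda<0$. Writing $\Gamma_0=P^{*}(H_0-z)^{-1}P$ with $P\colon\ell^2(\mathbb{Z}^d)\to\ell^2(\mathbb{Z}^{d+1})$ the isometric embedding onto $\{x=0\}$, one gets $\Im\Gamma_0=\Im z\cdot P^{*}(H_0-\bar z)^{-1}(H_0-z)^{-1}P$, which has the sign of $\Im z$ and satisfies $\abs{\langle\phi,\Im\Gamma_0\,\phi\rangle}=\abs{\Im z}\,\norm{(H_0-z)^{-1}P\phi}^{2}\geq\abs{\Im z}\,\norm{\phi}^{2}/\norm{H_0-z}^{2}$. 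Taking, say, $\lambda>0$ and $\Im z<0$ (the other case is symmetric), this yields $\Im(\lambda\Gamma_0)\leq-\varepsilon I$ for some $\varepsilon>0$, whence $\lambda\Gamma_0-iI$ is invertible with $\norm{(\lambda\Gamma_0-iI)^{-1}}\leq(1+\varepsilon)^{-1}$, and the Cayley-type identity
\[
  \norm{(\lambda\Gamma_0+iI)\psi}^{2}-\norm{(\lambda\Gamma_0-iI)\psi}^{2}=4\langle\psi,\Im(\lambda\Gamma_0)\psi\rangle\leq-4\varepsilon\norm{\psi}^{2}
\]
forces $C=(\lambda\Gamma_0-iI)^{-1}(\lambda\Gamma_0+iI)$ to be a strict contraction; since $U$ is unitary and $\abs{\delta}=1$ this gives $\norm{\delta CU}<1$, so $I-\delta CU$ is invertible by a Neumann series. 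Therefore $M$ is boundedly invertible, $h=-i(I-\delta CU)^{-1}(\lambda\Gamma_0-iI)^{-1}g_{0,x_2}(\cdot-n_2)$, and hence
\[
  f=-i\lambda(\delta U-I)h=\lambda(1-\delta U)(I-\delta CU)^{-1}(\lambda\Gamma_0-iI)^{-1}g_{0,x_2}(\cdot-n_2)=T\big[g_{0,x_2}(\cdot-n_2)\big],
\]
with $T$ as in \eqref{Def.T}; as noted above, \eqref{Resolventidentity2} follows. The statement for $H^{+}$ is obtained by the same argument with $H_0$ replaced by $H_0^{+}$.

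The main obstacle is precisely the passage from the purely symbolic $T=(\lambda^{-1}v^{-1}+\Gamma_0)^{-1}$ to the bona fide bounded operator \eqref{Def.T}: one must ensure that $g_{x_2}$ (equivalently $g_{0,x_2}$) is compatible with the unbounded factor $v$, which is exactly what the domain identity $\operatorname{dom}(v)=\operatorname{Ran}(\delta U+I)$ and the boundedness of $M$ deliver, and one must obtain $\norm{C}<1$ rather than merely $\norm{C}\leq 1$ — this uses the strict definiteness of $\Im\Gamma_0$, hence $\Im z\neq 0$, in an essential way (if $\norm{C}=1$, then $I-\delta CU$ can fail to be invertible). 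The remaining steps are routine computation.
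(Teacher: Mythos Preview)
Your argument is correct. The two proofs agree on the outline --- the crux is that $\lambda\Gamma_0-iI$ is invertible and $\norm{C}<1$ when $\Im z\cdot\lambda<0$, after which the factorization \eqref{Def.T} and the identity \eqref{Resolventidentity2} follow --- but the routes to those two facts differ. The paper passes to momentum representation: under the Fourier transform $\Gamma_0$ becomes multiplication by the explicit function $\hat\Gamma_0(y)=\int_{\mathbb T}(\sum_j 2\cos 2\pi y_j-z)^{-1}\,dy_{d+1}$, and one reads off $\Im\lambda\hat\Gamma_0(y)<0$ pointwise, which immediately gives invertibility of $\lambda\Gamma_0-iI$ and $\abs{(\lambda\hat\Gamma_0+i)/(\lambda\hat\Gamma_0-i)}<1$, hence $\norm{C}<1$. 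You instead stay in operator language: writing $\Gamma_0=P^{*}(H_0-z)^{-1}P$ gives the strict sign of $\Im\Gamma_0$ via the resolvent identity, and the Cayley-type norm identity then forces $\norm{C}<1$. Your approach avoids the explicit integral formula and would apply verbatim whenever $\Gamma_0$ is the compression of the resolvent of any bounded self-adjoint operator (no translation invariance needed), whereas the paper's computation is shorter and yields the concrete formula for $\hat\Gamma_0$ that is reused later (e.g.\ in defining $\zeta$ in \eqref{defzeta}). You are also more careful than the paper about the passage from the symbolic $T=(\lambda^{-1}v^{-1}+\Gamma_0)^{-1}$ to the bounded operator \eqref{Def.T}: the substitution $h=(\delta U+I)^{-1}g_{x_2}$ and the domain identity $\operatorname{dom}(v)=\operatorname{Ran}(\delta U+I)$ make this step rigorous, whereas the paper simply asserts that once $\norm{C}<1$ the rest ``holds by directly computed.''
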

\begin{proof}
Suppose we can prove $\lambda \Gamma_0-i I$ is invertible and $\|C\|<1$, then equation (\ref{Def.T}) is well defined, and (\ref{Resolventidentity2}) holds by directly computed.
Now we start to prove the facts  by using Fourier transformation.
Given a vector $f_n$ in $\ell^2(\mathbb{Z}^b)$ with $b=d$ or $b=d+1$, define a $L^2(\mathbb{T}^b)$ function  by the following equation,
\begin{equation*}
     \hat{f}(y)=\sum_{n\in \mathbb{Z}^b}f_ne^{ -2\pi i n\cdot y}.
\end{equation*}
Recall that $\mathbb{T}=\mathbb{R}/\mathbb{Z}$.
We call $\hat{f}(y)$ be the momentum representation of vector $ f_n$.
Under the Fourier transformation, $H_0$ become a multiplication operator on $L^2(\mathbb{T}^{d+1})$ and denote by $\hat{H}_0$.
More precisely,
\begin{equation*}
    ( \hat{H}_0 \hat{f})(y)=(\sum_{j=1}^{d+1}2\cos2\pi y_j)\hat{f}(y) .
\end{equation*}
Thus in momentum representation, Green function $G_0$ has the following simple form,
\begin{equation*}
    ( \hat{G}_0 \hat{f})(y)=(\frac{1}{\sum_{j=1}^{d+1}2\cos2\pi y_j-z})\hat{f}(y) ,
\end{equation*}
with $y\in  \mathbb{T}^{d+1}$.
And
 the operator $\Gamma_0$ becomes
\begin{equation*}
    ( \hat{\Gamma}_0 \hat{f})(y)= \hat{\Gamma}_0(y) \hat{f}(y) ,
\end{equation*}
where
\begin{equation}\label{Def.Gamma_0}
    \hat{\Gamma}_0(y)=\int_{\mathbb{T}}\frac{1}{\sum_{j=1}^{d+1}2\cos2\pi y_j-z}dy_{d+1} ,
\end{equation}
with
$y\in  \mathbb{T}^{d}$.
Notice that $ \Im  \lambda \hat{\Gamma}_0(y)<0$, then  $\lambda \Gamma_0-i I$ is invertible(because of $ \Im z\cdot \lambda<0$).
Furthermore,  $C$  also becomes a   multiplication operator in momentum representation,
\begin{equation*}
   ( \hat{C}\hat{f}) (y)= \frac{\lambda \hat{\Gamma}_0(y)+i }{\lambda \hat{\Gamma}_0(y)-i}\hat{f}(y).
\end{equation*}
This yields   $\|C\|<1$.
\end{proof}
Now we turn to the operator $H^+$. If a operator $A$  generates  from $H $,  then denote by   $A^+$ be the corresponding operator generated from $H^+$.
For example $G^+= (H^+-zI)^{-1} $.
Using a similar arguments like before,
one has
\begin{equation}\label{Resolventidentity+}
     G^+(X_1;X_2)=G_0^+(X_1;X_2)-\sum_{\eta\in \mathbb{Z}^d}G_0^+(n_1,x_1; \eta,0)v(\eta) G^+(\eta,0;X_2),
\end{equation}
where $X_1=(n_1,x_1)\in \mathbb{Z}_+^{d+1}$, $X_2=(n_2,x_2)\in \mathbb{Z}_+^{d+1}$ and $v(\eta)=\lambda  \tan\pi(\alpha\cdot \eta+\theta)$.
We should point out that  $G^+_0(n_1,x_1;n_2,x_2)$ depends  on $n_1-n_2,x_1,x_2$, this is different from $G_0(n_1,x_1;n_2,x_2)$, which only
depends on $n_1-n_2,x_1-x_2$.
\par
Set $x_1=0$ in (\ref{Resolventidentity+}), we find that  the function $ g_{x_2}^+(n_1;n_2)=G^+(n_1,0;n_2,x_2)$ satisfies
\begin{equation}\label{Resolventidentity1+}
     g_{x_2}^+(n_1;n_2)=g_{0,x_2}^+(n_1; n_2)-\sum_{\eta\in \mathbb{Z}^d}\Gamma_0^+(n_1-\eta )v(\eta) g_{x_2}^+(\eta;n_2),
\end{equation}
where $ g_{0,x_2}^+(n_1;n_2)=G_0^+(n_1,0;n_2,x_2)$ and $ \Gamma_0^+(\eta)=G_0^+(\eta,0; 0,0)$ with $\eta\in \mathbb{Z}^d$.
\par
Fix $x_2$,  equation (\ref{Resolventidentity1+}) becomes a problem in dimension $d$, and has a symbolic
solution $g_{x_2}^+=(I+\Gamma_0^+v)^{-1}g_{0,x_2}^+ $. Combining with (\ref{Resolventidentity+}), we have
\begin{eqnarray}
\nonumber
   G^+(X_1;X_2) &=&  G^+(n_1,x_1;n_2,x_2)\\
    &=&   G^+_0(n_1,x_1;n_2,x_2)-\sum_{\eta_1,\eta_2\in \mathbb{Z}^d} G_0^+(n_1,x_1;  \eta_1,0)T^+(\eta_1,\eta_2)G_0^+(\eta_2,0; n_2,x_2),\label{Resolventidentity2+}
\end{eqnarray}
where $T^+=v(I+\Gamma_0^+v)^{-1}=(v^{-1}+\Gamma_0^+)^{-1}$.
\par
We should say more about the operator $ \Gamma_0^+ $.
 In   momentum representation, $ \Gamma_0^+ $ has the following form   \cite{jakvsic1998spectrum},
\begin{equation*}
    (\hat{\Gamma}^+_0\hat{f})(y)= -r(y,z)\hat{f}(y) ,
\end{equation*}
 where $ r(y,z)$ is the root of the quadratic equation
\begin{equation*}
    X+\frac{1}{X}+\sum_{j}^d2\cos2\pi y_j=z
\end{equation*}
with $|r(y,z)|<1$.
\par
By a similar proof above, we have the following theorem.
 \begin{theorem}\label{Thm.Greenfunction2}
Suppose   $\Im z \cdot \lambda<0$,  then equation
(\ref{Resolventidentity2+})  holds and
   $T^{+}$ can    be written as
\begin{equation}\label{Def.T+}
    T^{+}=\lambda (1-\delta U) (I-\delta C^+U)^{-1}(\lambda \Gamma_0^+-i I)^{-1},
\end{equation}
where
\begin{equation*}
    C^+=(\lambda \Gamma_0^+-i I)^{-1}(\lambda  \Gamma_0^++i I).
\end{equation*}
\end{theorem}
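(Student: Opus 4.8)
The plan is to run exactly the argument used for Theorem~\ref{Thm.Greenfunction1}. As there, it suffices to establish two operator-theoretic facts under the standing hypothesis $\Im z\cdot\lambda<0$: that $\lambda\Gamma_0^+-iI$ is boundedly invertible, and that $\|C^+\|<1$. Granting these, $(I-\delta C^+U)^{-1}$ exists as a bounded operator (since $U$ is unitary and $|\delta|=1$, so $\|\delta C^+U\|=\|C^+\|<1$), the right-hand side of (\ref{Def.T+}) is a product of bounded operators, and a short algebraic identity shows it equals $(v^{-1}+\Gamma_0^+)^{-1}$; this simultaneously legitimizes the symbolic solution of (\ref{Resolventidentity1+}) and hence makes (\ref{Resolventidentity2+}) rigorous.

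Both facts reduce, in the momentum representation, to elementary bounds on the symbol $\hat{\Gamma}_0^+(y)=-r(y,z)$. Set $W=\hat{\Gamma}_0^+(y)$; the defining quadratic gives $W+W^{-1}=c-z$ where $c=\sum_{j=1}^{d}2\cos2\pi y_j\in\mathbb{R}$, and $|W|=|r(y,z)|<1$. Taking imaginary parts and using $\Im(W^{-1})=-\Im W/|W|^{2}$ gives $\Im W\,(1-|W|^{-2})=-\Im z$, i.e.
\[
\Im\hat{\Gamma}_0^+(y)=\Im z\,\frac{|W|^{2}}{1-|W|^{2}},
\]
which has the same sign as $\Im z$; since $\lambda\in\mathbb{R}$ and $\Im z\cdot\lambda<0$, this yields $\Im(\lambda\hat{\Gamma}_0^+(y))<0$ for every $y$. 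Moreover $|W|^{-1}-|W|\le|W+W^{-1}|=|c-z|\le|z|+2d$ forces $|W|\ge(|z|+2d+1)^{-1}=:\rho_0>0$, so together with $|W|<1$ one gets the uniform bounds $|\lambda\hat{\Gamma}_0^+(y)|<|\lambda|$ and $|\Im(\lambda\hat{\Gamma}_0^+(y))|\ge|\lambda|\,|\Im z|\,\rho_0^{2}/(1-\rho_0^{2})>0$. From $\Im(\lambda\hat{\Gamma}_0^+(y))<0$ we get $|\lambda\hat{\Gamma}_0^+(y)-i|\ge1$, so $\lambda\Gamma_0^+-iI$ is invertible with bounded inverse; and from
\[
|\hat{C}^+(y)|^{2}=1-\frac{4\,|\Im(\lambda\hat{\Gamma}_0^+(y))|}{|\lambda\hat{\Gamma}_0^+(y)|^{2}+2\,|\Im(\lambda\hat{\Gamma}_0^+(y))|+1}
\]
together with those bounds, $\sup_y|\hat{C}^+(y)|$ is strictly below $1$, i.e. $\|C^+\|<1$.

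For the algebra, I would use $\tan\pi\phi=\tfrac1i\,\frac{e^{2\pi i\phi}-1}{e^{2\pi i\phi}+1}$ to write $v^{-1}=\tfrac{i}{\lambda}(I+\delta U)(\delta U-I)^{-1}$ as a multiplier in $U$, whence
\[
v^{-1}+\Gamma_0^+=\Bigl[\tfrac{i}{\lambda}(I+\delta U)+\Gamma_0^+(\delta U-I)\Bigr](\delta U-I)^{-1}=-\tfrac1\lambda\,(\lambda\Gamma_0^+-iI)\,(I-\delta C^+U)\,(\delta U-I)^{-1},
\]
and inverting reproduces (\ref{Def.T+}). The same manipulation, read in the opposite order, also shows $(I+\Gamma_0^+v)^{-1}$ makes sense, so (\ref{Resolventidentity2+}) holds. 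I expect the only genuine difficulty to be technical bookkeeping: $v$, $v^{-1}$ and $(\delta U-I)^{-1}$ are in general unbounded (the spectrum of $U$ is the whole unit circle when $\alpha$ is rationally independent), so the displayed identities must be interpreted on a suitable dense domain, or—cleaner—one observes that the bounded operator on the right of (\ref{Def.T+}) agrees with $T^+$ on the dense set where every factor is manifestly defined and then extends by continuity. This is precisely the argument carried out on p.116--p.118 of \cite{khoruzhenko1997localization} for $H$, and it transfers to $H^+$ without change once the symbol estimates above are in hand.
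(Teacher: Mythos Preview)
Your proposal is correct and follows essentially the same approach as the paper: the paper's proof of this theorem is literally the single sentence ``By a similar proof above, we have the following theorem,'' referring back to Theorem~\ref{Thm.Greenfunction1}, whose proof consists of passing to the momentum representation, observing $\Im(\lambda\hat{\Gamma}_0)<0$ so that $\lambda\Gamma_0-iI$ is invertible and $\|C\|<1$, and declaring the rest ``directly computed.'' You have reproduced that skeleton and, in fact, been more careful than the paper at two points---you supply a genuine uniform bound $\sup_y|\hat C^+(y)|<1$ (the paper's argument only records the pointwise inequality) and you write out the algebraic identity linking \eqref{Def.T+} to $(v^{-1}+\Gamma_0^+)^{-1}$---but these are refinements of, not departures from, the paper's route.
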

\begin{remark}
Notice that for $z\in \mathbb{C}\backslash [-c_d,c_d]$, the operators  $\Gamma_0$  and $ \Gamma_0^+$  are well defined.
\end{remark}
\section{The spectra of surface Maryland model}
In this section, we will prove that  for the  surface  Maryland model $ \sigma(H_{\lambda,\alpha,\theta})=(-\infty,\infty)$  and
$ \sigma(H_{\lambda,\alpha,\theta}^+)=(-\infty,\infty)$.
From Theorem \ref{ThKP}, one has that for DC $\alpha$,  $ \sigma(H_{\lambda,\alpha,\theta})=(-\infty,\infty)$, $ \sigma(H_{\lambda,\alpha,\theta}^+)=(-\infty,\infty)$ for all $\lambda\neq 0$ and $\theta$.
Our main work in this section is to show that the spectrum $\sigma(H_{\lambda,\alpha,\theta})$ does not depend on  $\theta$.
\par
Following the discussion of unbounded operators in section VIII \cite{reed1972methods},  it suffices to show the self-adjoint operators  $H_{\lambda,\alpha,\theta}$
is continuous in norm resolvent sense with respect to $\theta$.  Combining with the specific formula of  the Green function of surface  Maryland model
(Theorem \ref{Thm.Greenfunction1} and Theorem \ref{Thm.Greenfunction2}),
we can obtain the results. In  \cite{liulana1}\color{black}, the authors have already  used this methods to show that the spectrum of  Maryland model does not depend on phase $\theta$.

 \begin{definition}\label{Def.Normresolvent}
 Let $B_n$, $n=1,2, \cdots$ and $B$ be self-adjoint operators, then $B_n$ is said to converge to $B$ in norm resolvent sense
 if $G_{z}(B_n)\rightarrow G_{z}( B)$ in norm for all $z$ with $\Im z\neq 0$, where $G_{z}(A)=(A-zI)^{-1}$.
 \end{definition}
\begin{theorem}(Theorem VIII.24, \cite{reed1972methods})\label{Continuousspectrum}
Let $B_n$, $n=1,2, \cdots$ and $B$ be self-adjoint operators, and  $B_n$  converge to $B$ in norm resolvent sense.
Then for any $E\in \sigma(B)$, there exists $E_n\in \sigma(B_n)$ such that $E_n\rightarrow E$.
\end{theorem}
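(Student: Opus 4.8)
The plan is to prove the statement by contradiction, using only two ingredients: the spectral-theorem identity $\|(A-zI)^{-1}\|=1/\operatorname{dist}(z,\sigma(A))$, valid for every self-adjoint operator $A$ and every $z$ with $\Im z\neq 0$; and the trivial fact that norm convergence $G_z(B_n)\to G_z(B)$ forces $\|G_z(B_n)\|\to\|G_z(B)\|$. I will also use the elementary geometric observation that, for a real closed set $S$ and a real number $E$, one has $\operatorname{dist}(E+i\epsilon,S)^2=\operatorname{dist}(E,S)^2+\epsilon^2$, since $|(E+i\epsilon)-s|^2=|E-s|^2+\epsilon^2$ for every $s\in S$.

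First I would fix $E\in\sigma(B)$ and suppose the conclusion fails. Since each $B_n$ is self-adjoint, $\sigma(B_n)$ is a nonempty closed subset of $\mathbb{R}$, and one may always choose $E_n\in\sigma(B_n)$ realizing (or nearly realizing) $\operatorname{dist}(E,\sigma(B_n))$; hence the existence of a sequence $E_n\in\sigma(B_n)$ with $E_n\to E$ is equivalent to $\operatorname{dist}(E,\sigma(B_n))\to 0$. If no such sequence exists, then $\operatorname{dist}(E,\sigma(B_n))$ does not tend to $0$, so along some subsequence (which, after relabeling, I take to be the whole sequence) one has $\operatorname{dist}(E,\sigma(B_n))\geq\delta$ for all $n$, with $\delta>0$ fixed.

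Now I would pick any $\epsilon\in(0,\delta)$ and set $z=E+i\epsilon$. Applying the geometric identity with $S=\sigma(B_n)$ gives $\operatorname{dist}(z,\sigma(B_n))\geq\sqrt{\delta^2+\epsilon^2}$, hence $\|G_z(B_n)\|\leq(\delta^2+\epsilon^2)^{-1/2}$ for all $n$. Letting $n\to\infty$ and using $\|G_z(B_n)\|\to\|G_z(B)\|$ yields $\|G_z(B)\|\leq(\delta^2+\epsilon^2)^{-1/2}$, that is, $\operatorname{dist}(z,\sigma(B))\geq\sqrt{\delta^2+\epsilon^2}$. On the other hand, since $E\in\sigma(B)$, the same geometric identity with $S=\sigma(B)$ gives $\operatorname{dist}(z,\sigma(B))=\sqrt{\operatorname{dist}(E,\sigma(B))^2+\epsilon^2}=\epsilon$. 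Combining the two bounds produces $\epsilon\geq\sqrt{\delta^2+\epsilon^2}$, which is absurd because $\delta>0$. This contradiction completes the argument.

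There is no serious obstacle here; the proof is entirely soft. The one point that must be respected is to invoke the \emph{sharp} resolvent identity $\|(A-zI)^{-1}\|=1/\operatorname{dist}(z,\sigma(A))$ rather than the crude bound $\|(A-zI)^{-1}\|\leq|\Im z|^{-1}$: it is exactly the gap between $\operatorname{dist}(z,\sigma(B))=\epsilon$ and the lower bound $\sqrt{\delta^2+\epsilon^2}$ inherited from the $B_n$ that drives the contradiction, and the crude bound would see none of it. Everything else — nonemptiness of $\sigma(B_n)$, achievability of the distance, continuity of the norm under norm convergence — is routine.
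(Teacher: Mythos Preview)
The paper does not supply its own proof of this theorem; it is simply quoted from Reed--Simon (Theorem~VIII.24) and used as a black box. Your argument is correct and is essentially the standard proof: the key input is precisely the sharp identity $\|(A-zI)^{-1}\|=1/\operatorname{dist}(z,\sigma(A))$ for self-adjoint $A$, together with the continuity of the operator norm under norm convergence, and your contradiction via $\epsilon\geq\sqrt{\delta^2+\epsilon^2}$ is the clean way to close it. There is nothing to compare against in the paper itself.
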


\begin{theorem}\label{Spectrumforalltheta}
The spectrum of surface Maryland model does not depend on phase $\theta$. More precisely,
set
$\sigma(H_{\lambda,\alpha,\theta}) $ or  $\sigma(H_{\lambda,\alpha,\theta}^+) $ does not depend on   phase $\theta$.
\end{theorem}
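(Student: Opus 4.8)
The plan is to combine translation covariance of the surface Maryland model in the phase $\theta$ with the explicit Green function formula of Theorem~\ref{Thm.Greenfunction1}: the potential is only pointwise continuous in $\theta$ (as a multiplication operator it is nowhere norm continuous in $\theta$), but the Green function formula absorbs the singular potential into a bounded, indeed analytic, operator, and this upgrades the dependence on $\theta$ to continuity in the norm resolvent sense; Theorem~\ref{Continuousspectrum} then finishes the job. I carry out the argument for $H$; the operator $H^+$ is handled identically, using Theorem~\ref{Thm.Greenfunction2} and (\ref{Resolventidentity2+})--(\ref{Def.T+}) in place of Theorem~\ref{Thm.Greenfunction1} and (\ref{Resolventidentity2})--(\ref{Def.T}). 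Throughout, call $\theta$ \emph{admissible} if it satisfies (\ref{Contheta}).

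First I would record the translation covariance. For $k\in\mathbb{Z}^d$ the shift $(\mathcal{T}_k\psi)_{(n,x)}=\psi_{(n-k,x)}$ is unitary on $\ell^2(\mathbb{Z}^{d+1})$, and since the Laplacian is translation invariant and $\tan\pi(\,\cdot\,)$ has period one, a direct computation gives $\mathcal{T}_k H_{\lambda,\alpha,\theta}\mathcal{T}_k^{-1}=H_{\lambda,\alpha,\theta-k\cdot\alpha}$; moreover $\theta-k\cdot\alpha$ is admissible whenever $\theta$ is. Hence $\sigma(H_{\lambda,\alpha,\theta})=\sigma(H_{\lambda,\alpha,\theta+k\cdot\alpha})$ for every $k\in\mathbb{Z}^d$. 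Since $\alpha$ is rationally independent, $\Lambda:=\{k\cdot\alpha\bmod 1:k\in\mathbb{Z}^d\}$ is a subgroup of $\mathbb{R}/\mathbb{Z}$ which is not finite (a finite subgroup of order $q$ would force $q\alpha_1\in\mathbb{Z}$), so $\overline{\Lambda}=\mathbb{R}/\mathbb{Z}$. Consequently, for any admissible $\theta_1,\theta_2$ there exist $k_n\in\mathbb{Z}^d$ with $\theta_2+k_n\cdot\alpha\to\theta_1$ modulo $1$; replacing $\theta_2+k_n\cdot\alpha$ by integer-translated representatives, which changes neither the operator (by $1$-periodicity of the potential) nor admissibility, we may assume $\theta_2+k_n\cdot\alpha\to\theta_1$ in $\mathbb{R}$.

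The core step is norm resolvent continuity in $\theta$. Fix $z$ with $\Im z\cdot\lambda<0$ (such $z$ exist), so that by Theorem~\ref{Thm.Greenfunction1} one has $\norm{C}<1$, with $C$, $U$ and $\Gamma_0$ all independent of $\theta$. Let $\iota$ denote the isometric embedding of $\ell^2(\mathbb{Z}^d)$ onto the subspace of $\ell^2(\mathbb{Z}^{d+1})$ supported on the hyperplane $\{x=0\}$; then (\ref{Resolventidentity2}) and (\ref{Def.T}) become the operator identity
\begin{equation*}
(H_{\lambda,\alpha,\theta}-zI)^{-1}=G_0-(G_0\iota)\,T_\theta\,(\iota^*G_0),
\end{equation*}
where $G_0=(H_0-zI)^{-1}$ and $\iota$ do not depend on $\theta$, while $T_\theta=\lambda(1-\delta U)(I-\delta C U)^{-1}(\lambda\Gamma_0-iI)^{-1}$ with $\delta=e^{2\pi i\theta}$; thus all the $\theta$-dependence sits in the bounded operator $T_\theta$. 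As $\theta$ ranges over $\mathbb{R}$, $\delta$ ranges over the unit circle, and since $U$ is unitary we have $\norm{\delta C U}=\norm{C}<1$ uniformly in $\theta$; the usual Neumann series estimate then shows $\delta\mapsto(I-\delta C U)^{-1}$ is norm continuous (in fact $\theta\mapsto T_\theta$ is real analytic), hence so is $\theta\mapsto(H_{\lambda,\alpha,\theta}-zI)^{-1}$. Since norm resolvent convergence follows from norm convergence of the resolvent at a single non-real point, it follows that $\theta_n\to\theta$ with all $\theta_n$, $\theta$ admissible implies $H_{\lambda,\alpha,\theta_n}\to H_{\lambda,\alpha,\theta}$ in the norm resolvent sense.

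Now I would assemble the pieces and comment on the difficulty. Given admissible $\theta_1,\theta_2$ and $E\in\sigma(H_{\lambda,\alpha,\theta_1})$, choose $\theta_2+k_n\cdot\alpha\to\theta_1$ in $\mathbb{R}$ as above; by the previous paragraph $H_{\lambda,\alpha,\theta_2+k_n\cdot\alpha}\to H_{\lambda,\alpha,\theta_1}$ in the norm resolvent sense, so Theorem~\ref{Continuousspectrum} produces $E_n\in\sigma(H_{\lambda,\alpha,\theta_2+k_n\cdot\alpha})$ with $E_n\to E$. But $\sigma(H_{\lambda,\alpha,\theta_2+k_n\cdot\alpha})=\sigma(H_{\lambda,\alpha,\theta_2})$, a closed set, so $E\in\sigma(H_{\lambda,\alpha,\theta_2})$; hence $\sigma(H_{\lambda,\alpha,\theta_1})\subseteq\sigma(H_{\lambda,\alpha,\theta_2})$, and interchanging $\theta_1$ and $\theta_2$ gives equality, with the identical argument applying to $H^+$. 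The main obstacle is the third paragraph: the obvious idea of writing $H_{\lambda,\alpha,\theta}=H_0+\lambda\,\delta(x)\,v_\theta$ with $v_\theta(n)=\tan\pi(\alpha\cdot n+\theta)$ and letting $v_{\theta_n}\to v_\theta$ is hopeless, because the poles of $\tan$ make the convergence $v_{\theta_n}(n)\to v_\theta(n)$ wildly non-uniform in $n$, so the perturbations converge in no operator norm. The dimension reduction of \S2 exists precisely to trade the singular multiplication operator for the bounded analytic operator $T_\theta$; checking the factorization of the resolvent through $G_0$ and $\iota$ and the norm continuity of $T_\theta$ is where essentially all the substance lies, the remainder being soft functional analysis.
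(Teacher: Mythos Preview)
Your proof is correct and follows essentially the same route as the paper: density of the orbit $\{\theta+k\cdot\alpha\}$ by rational independence, unitary equivalence under the shift, norm resolvent continuity of $\theta\mapsto H_{\lambda,\alpha,\theta}$ extracted from the explicit Green function formula (\ref{Resolventidentity2})--(\ref{Def.T}), and then Theorem~\ref{Continuousspectrum}. The only noteworthy difference is how you cover the half-plane $\lambda\cdot\Im z>0$: the paper invokes the symmetry $H_{\lambda,\alpha,\theta}=H_{-\lambda,-\alpha,-\theta}$ to reduce to the half-plane where Theorem~\ref{Thm.Greenfunction1} applies, whereas you appeal to the standard fact that norm convergence of the resolvent at one non-real point implies norm resolvent convergence everywhere---both are fine.
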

\begin{proof}

For  given  two phases $\theta_1$ and $\theta_2$,   there exists some sequence $j_k\in \mathbb{Z}^d$
such that
\begin{equation*}
 \lim_{k\rightarrow \infty}||\theta_1 +j_k \cdot \alpha-\theta_2||_{\mathbb{R}/\mathbb{Z}}\rightarrow 0 ,
\end{equation*}
since $\alpha$ is rationally  independent.

 Let $B_k= H_{\lambda,\alpha,\theta_1+j_k\cdot \alpha}$ ($B_k= H_{\lambda,\alpha,\theta_1+j_k \cdot\alpha}^+$) and
$B=H_{\lambda,\alpha,\theta_2}$ ($B=H_{\lambda,\alpha,\theta_2}^+$). We must have $B_k$   converge to $B$ in norm resolvent sense.
Indeed, if $ \lambda\cdot  \Im z<0$, this is easy to see by   (\ref{Resolventidentity2}) and (\ref{Def.T}) ((\ref{Resolventidentity2+}) and (\ref{Def.T+})); if $\lambda \cdot\Im z>0$,  this is also true     by the fact $H_{\lambda,\alpha,\theta}=H_{-\lambda,-\alpha,-\theta}$ ($H_{\lambda,\alpha,\theta}^+=H_{-\lambda,-\alpha,-\theta}^+$).
   \par
   Applying Theorem \ref{Continuousspectrum}, for any $E\in \sigma(H_{\lambda,\alpha,\theta_2})$ ($\sigma(H_{\lambda,\alpha,\theta_2}^+)$), there exists
  $E_k\in \sigma(B_k)$ such that $E_k\rightarrow E$.  Clearly, the spectrum of operator  $B_k$ does not depend on $k$, i.e.,
  $\sigma(B_k)=\sigma(H_{\lambda,\alpha,\theta_1})$ ($\sigma(B_k)=\sigma(H_{\lambda,\alpha,\theta_1}^+)$) because the shift operator is a unitary operator. Thus
   we must have
   \begin{equation*}
     \sigma(H_{\lambda,\alpha,\theta_2})\subseteq  \sigma(H_{\lambda,\alpha,\theta_1}),
   \end{equation*}
and
\begin{equation*}
     \sigma(H_{\lambda,\alpha,\theta_2}^+)\subseteq  \sigma(H_{\lambda,\alpha,\theta_1}^+).
   \end{equation*}
This implies the  theorem since    $ \theta_1$ and $ \theta_2$ are arbitrary.
\end{proof}
{\bf The proof of Theorem \ref{Thspectra}}
\begin{proof}
To avoid repetition,
we only need to prove the theorem for   operator  $H$. Denote by $ \mathbb{E} \{\cdot\}$  the average value over $\mathbb{T}$.
Since $\|C\|<1$, one has
\begin{equation*}
   (I-\delta CU)^{-1}=I+\sum_{k=1}^{\infty} (\delta CU)^k.
\end{equation*}
Notice that $\delta U=U\delta$ and $\delta C=C\delta$, we have
\begin{equation*}
   (I-\delta CU)^{-1}=I+\sum_{k=1}^{\infty} \delta^k(CU)^k.
\end{equation*}
 Putting (\ref{Def.T})
into  (\ref{Resolventidentity2}) and integrating  over $\mathbb{T}$ with respect to variable $\theta$, one has that
$\mathbb{E}\{G\}$ does not depend on $\alpha$.
\par
Suppose   Theorem \ref{Thspectra} does not hold, then   by Theorem \ref{Spectrumforalltheta} there exist some $\alpha_1\in \mathbb{R}^d$ and  an interval $I$
such that $I\subset(-\infty,\infty)\backslash \sigma(H_{\lambda,\alpha_1,\theta})$ for all $\theta$.
Thus for any given vectors  $\delta_y$, $y\in \mathbb{Z}^{d+1}$, the
spectral measure $\mu _{\theta,\alpha_1, y}$ determined by
\begin{equation}\label{reequ1}
  (  ( H_{\lambda,\alpha_1,\theta}-zI)^{-1} \delta_{y}, \delta_{y})=\int \frac{d\mu_{\theta,\alpha_1,y}(\phi)}{\phi-z }
\end{equation}
satisfies
\begin{equation*}
    \mu_{\theta,\alpha_1,y}(I)=0 \text{ for all } \theta.
\end{equation*}
\par
Using the fact that
 $\mathbb{E}\{G\}$ does not depend on $\alpha$ and (\ref{reequ1}),
one has
\begin{equation*}
  \int_{\mathbb{T}}  \mu_{\theta,\alpha,y}(I)d\theta = \int_{\mathbb{T}}  \mu_{\theta,\alpha_1,y}(I)d\theta=0
\end{equation*}
for any  $\alpha\in\mathbb{R}^d$ and $y \in \mathbb{Z}^{d+1}$.
Combining  with the fact that $\sigma(H_{\lambda,\alpha,\theta})$ does not depend on $\theta$,   we have
 \begin{equation*}
 I\subset(-\infty,\infty)\backslash \sigma(H_{\lambda,\alpha,\theta})
 \end{equation*}
for all  $\alpha\in\mathbb{R}^d$.
This is contradicted to Theorems \ref{acsp} and  \ref{ThKP}.
\end{proof}
\section{ The spectral measures  for $d=1$}
  We always assume $E\in (-\infty,\infty)\backslash [-c_d,c_d]$ below. 
\par

Let
\begin{equation*}
    \mathcal{P}_{b}=\{\{\psi_y\}_{y\in \mathbb{Z}^{b} }: |\psi_y|\leq C(1+||y||^{C} )\text{ for some constant }C \},
\end{equation*}
where $b=d$ or $b=d+1$.
Usually, we call elements in $\mathcal{P}_{b}$  polynomially  bounded vectors.

\begin{theorem}\label{Thm.Anotherform}
If the equation
\begin{equation}\label{G.Anotherform}
     v(n)^{-1}\varphi_n +\sum_{\eta\in\mathbb{Z}^d}\lambda\Gamma_0(n-\eta)\varphi_{\eta}=0
\end{equation}
with $n\in \mathbb{Z}^d$ has a nonzero polynomially  bounded solution $\varphi$, then
\begin{equation}\label{G.Anotherform1}
   \psi_{(n,x)}=\sum_{\eta\in\mathbb{Z}^d}G_0(n-\eta,x)\varphi_{\eta}
\end{equation}
solves equation (\ref{Def.H}) and $\psi_{(n,x)}$ is nonzero  and polynomially  bounded.
If equation (\ref{Def.H}) has a nonzero  polynomially  bounded solution,
define
\begin{equation}\label{varphi_n}
    \varphi_n=\sum_{\eta\in\mathbb{Z}^d}\Gamma_0^{-1}(n-\eta) \psi_{( \eta,0)}.
\end{equation}
Then $\varphi_n$ solves equation (\ref{G.Anotherform}) and $\varphi_n$ is nonzero and polynomially  bounded.
\end{theorem}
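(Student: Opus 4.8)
The plan is to exploit the explicit momentum-space representations of $G_0$ and $\Gamma_0$ to pass back and forth between the $(d{+}1)$-dimensional eigenvalue problem \eqref{Def.H} and the $d$-dimensional equation \eqref{G.Anotherform}. The core algebraic identity is that, in the Fourier picture, $\hat\Gamma_0(y)$ is literally the restriction of $\hat G_0$ integrated over the transverse variable $y_{d+1}$ (see \eqref{Def.Gamma_0}); consequently $\Gamma_0$ acts as a convolution on $\mathbb{Z}^d$ whose symbol is nonvanishing for $E\notin[-c_d,c_d]$, so $\Gamma_0^{-1}$ is a well-defined bounded convolution operator and the formula \eqref{varphi_n} makes sense. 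I would first record, as a preliminary lemma, the decay estimates $|G_0(n,x)|\le C(1+\|n\|+|x|)^{C}\rho^{\|n\|+|x|}$ and $|\Gamma_0(n)|, |\Gamma_0^{-1}(n)|\le C\rho^{\|n\|}$ for some $\rho=\rho(E)<1$, valid in the regime $E\notin[-c_d,c_d]$; these follow from analyticity of the symbols on the torus and a standard contour-shift/stationary-phase argument. With exponential decay of the kernels in hand, all the convolutions below applied to a polynomially bounded vector converge absolutely and again produce polynomially bounded vectors, so every manipulation is rigorous rather than merely symbolic.

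For the first implication, suppose $\varphi\in\mathcal{P}_d$ is a nonzero solution of \eqref{G.Anotherform}, and set $\psi_{(n,x)}=\sum_\eta G_0(n-\eta,x)\varphi_\eta$. Applying $H_0$ to $\psi$ and using $(H_0-E)G_0(\cdot)=\delta_0$ (here I use $G_0$ as the resolvent kernel at energy $E$, extended from $\Im z\neq 0$ by the decay just established), I get $((H_0-E)\psi)_{(n,x)}=\delta(x)\varphi_n$. On the other hand, restricting \eqref{G.Anotherform1} to $x=0$ gives $\psi_{(n,0)}=\sum_\eta\Gamma_0(n-\eta)\varphi_\eta$, i.e. $\psi_{(\cdot,0)}=\Gamma_0\varphi$, hence $\varphi=\Gamma_0^{-1}\psi_{(\cdot,0)}$; plugging the relation $\varphi_n=-\lambda v(n)\,(\Gamma_0\varphi)_n\cdot$ — more precisely, rewriting \eqref{G.Anotherform} as $\varphi_n=-\lambda v(n)\sum_\eta\Gamma_0(n-\eta)\varphi_\eta=-\lambda v(n)\psi_{(n,0)}$ — shows $\delta(x)\varphi_n=-\lambda\delta(x)v(n)\psi_{(n,0)}=-\lambda\delta(x)\tan\pi(\alpha\cdot n+\theta)\psi_{(n,x)}$, which is exactly the statement that $(H-E)\psi=0$, i.e. $\psi$ solves \eqref{Def.H}. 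Nonvanishing of $\psi$: if $\psi\equiv 0$ then $\psi_{(\cdot,0)}=\Gamma_0\varphi=0$, and invertibility of $\Gamma_0$ forces $\varphi=0$, a contradiction. Polynomial boundedness is immediate from the decay lemma.

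For the converse, given a nonzero $\psi\in\mathcal{P}_{d+1}$ solving \eqref{Def.H}, define $\varphi_n=\sum_\eta\Gamma_0^{-1}(n-\eta)\psi_{(\eta,0)}$, so that $\psi_{(\cdot,0)}=\Gamma_0\varphi$ and $\varphi\in\mathcal{P}_d$. From $(H_0-E)\psi=\lambda\delta(x)v(n)\psi_{(n,0)}\cdot$ — i.e. $((H_0-E)\psi)_{(n,x)}=-\lambda\delta(x)v(n)\psi_{(n,0)}$ — applying $G_0$ and using the convolution structure in the transverse variable, one obtains $\psi_{(n,x)}=-\lambda\sum_\eta G_0(n-\eta,x)v(\eta)\psi_{(\eta,0)}$ (the ``free'' term drops because $\psi$ is polynomially bounded and $(H_0-E)^{-1}$ is smoothing in this regime; this is where one must be a little careful — the subtlety is that $\psi$ need not be $\ell^2$, so one argues via the exponential decay of $G_0$ and a uniqueness statement for polynomially bounded solutions of $(H_0-E)u=f$ with $f$ of compact support in $x$). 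Restricting to $x=0$ gives $\psi_{(n,0)}=-\lambda\sum_\eta\Gamma_0(n-\eta)v(\eta)\psi_{(\eta,0)}$, i.e. $\Gamma_0\varphi=-\lambda\Gamma_0 v\Gamma_0\varphi$; cancelling the outer $\Gamma_0$ (invertible) yields $\varphi=-\lambda v\,\Gamma_0\varphi$, which upon dividing by $v(n)$ is precisely \eqref{G.Anotherform}. Finally $\varphi\neq 0$: if $\varphi\equiv 0$ then $\psi_{(\cdot,0)}=\Gamma_0\varphi=0$, and then the representation $\psi_{(n,x)}=-\lambda\sum_\eta G_0(n-\eta,x)v(\eta)\psi_{(\eta,0)}=0$ forces $\psi\equiv 0$, contradiction.

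I expect the main obstacle to be the converse direction — specifically, justifying that a polynomially bounded solution $\psi$ of \eqref{Def.H} is recovered from its trace on $\{x=0\}$ via the free Green function, i.e. proving the uniqueness/representation statement ``$(H_0-E)u=0$ on $\mathbb{Z}^{d+1}$, $u$ polynomially bounded, $E\notin[-c_d,c_d]$ $\Longrightarrow u=0$''. This follows because after a partial Fourier transform in $n$ the equation becomes, for each fixed $y\in\mathbb{T}^d$, a second-order constant-coefficient recursion in $x$ whose characteristic roots have modulus $\neq 1$ (since $\sum 2\cos 2\pi y_j + \lambda_\pm^{\pm1}= E$ has no unimodular root when $E\notin[-c_d,c_d]$), so the only tempered solution in $x$ is the one decaying at $\pm\infty$, and matching at $x=0$ pins down $u$ in terms of $u(\cdot,0)$; combined with the decay lemma this gives the stated representation. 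Everything else is bookkeeping with absolutely convergent convolutions.
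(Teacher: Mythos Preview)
Your proof is correct and follows essentially the same route as the paper's: apply $(H_0-E)$ or $G_0=(H_0-E)^{-1}$ to pass between the two equations, restrict to $x=0$ to identify $\psi_{(\cdot,0)}=\Gamma_0\varphi$, and use the exponential decay of $G_0,\Gamma_0,\Gamma_0^{-1}$ for $E\notin[-c_d,c_d]$ to make all convolutions with polynomially bounded vectors absolutely convergent. Your treatment is in fact more careful than the paper's (which simply writes ``act $G_0$ on the equation'' without further comment): you correctly isolate as the only nontrivial point the uniqueness statement that $(H_0-E)u=0$, $u$ polynomially bounded, $E\notin[-c_d,c_d]$ forces $u=0$, and your sketch via partial Fourier transform in $n$ and the hyperbolicity of the remaining recursion in $x$ is the right way to justify it.
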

Notice that for $E\in (-\infty,\infty)\backslash [-c_d,c_d]$, the operator $\Gamma_0$ is well defined and
\begin{equation*}
   |\Gamma_0^{-1}(n )|,  |\Gamma_0(n )|\leq e^{-t||n||}
\end{equation*}
for some $t>0$ ($t$ depends on the distance between $E$ and $[-c_d,c_d]$). Similarly, for some $t>0$,
\begin{equation*}
    |G_0(n,x)|\leq e^{-t(||n||+|x|)}
\end{equation*}
with $n\in \mathbb{Z}^d$.
\par
The proof of Theorem \ref{Thm.Anotherform} can be found in \cite{khoruzhenko1997localization}.
For convenience, we give a short proof in the Appendix.
\par

Now we introduce  the Cayley transformation for a self-adjoint operator.
\begin{definition}\label{Def.Cayley}
  Given a self-adjoint operator $A$ on Hilbert space $ \mathbb{H} $, we call operator
$ ( I-iA)(  I +iA)^{-1}$  the Cayley transformation for operator $A$.
 \end{definition}
Clearly,  $( I+iA)^{-1}$ is well defined and $ ( I-iA)(  I +iA)^{-1}$ is a unitary operator.
\par
\begin{lemma}(\cite{cyconschrodinger,MR776654})\label{TRF1}
If   a vector $u\in  \mathcal{P}_d $ satisfies  equation (\ref{G.Anotherform}), and define $ c=(1+i\lambda\Gamma_0)\varphi$, then   $c\in  \mathcal{P} _d$ and
\begin{equation}\label{Clay}
    (1+iv^{-1})(1-iv^{-1})^{-1}c=(1-i\lambda\Gamma_0)(1+i\lambda\Gamma_0)^{-1}c.
\end{equation}
And the converse is true. More concretely, if $c\in  \mathcal{P} _d$  and satisfies (\ref{Clay}), then
the vector $ \varphi=(1+i\lambda\Gamma_0)^{-1}c \in \mathcal{P}_d$  and satisfies equation  (\ref{G.Anotherform}).
\end{lemma}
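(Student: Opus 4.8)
The plan is to recognize that equation~(\ref{G.Anotherform}) is exactly $v^{-1}\varphi=-\lambda\Gamma_0\varphi$, and that, writing this as $D\varphi=-A\varphi$ with $D$ the multiplication operator by $\cot\pi(\alpha\cdot n+\theta)$ and $A=\lambda\Gamma_0$ a bounded self-adjoint operator, the Cayley transform converts it into~(\ref{Clay}); conversely (\ref{Clay}) recovers the equation. The algebra is short and essentially forced, so the real work is checking that every vector produced stays in $\mathcal{P}_d$. I would first prepare two observations. First, $(1+iv^{-1})(1-iv^{-1})^{-1}$, the operator on the left of~(\ref{Clay}), is multiplication by $\frac{1+i\cot\pi(\alpha\cdot n+\theta)}{1-i\cot\pi(\alpha\cdot n+\theta)}=-e^{-2\pi i(\alpha\cdot n+\theta)}=-\delta^{-1}U^{-1}$, a unitary shift, and $(1\pm iv^{-1})^{-1}$ is multiplication by a function of modulus $|\sin\pi(\alpha\cdot n+\theta)|\le 1$; hence all of these map $\mathcal{P}_d$ into itself. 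Second, since for $E\notin[-c_d,c_d]$ the kernel of $\Gamma_0$ decays exponentially (equivalently $\hat\Gamma_0$ is analytic on a strip around $\mathbb{T}^d$) and $|1+i\lambda\hat\Gamma_0(y)|\ge 1$, the operators $1\pm i\lambda\Gamma_0$ and their inverses $(1\pm i\lambda\Gamma_0)^{-1}$ also have exponentially decaying kernels, so convolving a vector of $\mathcal{P}_d$ with any of them stays in $\mathcal{P}_d$.

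Granting these, the forward implication is immediate: if $\varphi\in\mathcal{P}_d$ satisfies $v^{-1}\varphi=-\lambda\Gamma_0\varphi$ and $c=(1+i\lambda\Gamma_0)\varphi$, then $c\in\mathcal{P}_d$ by the second observation, and pointwise $(1-iv^{-1})\varphi=\varphi+i\lambda\Gamma_0\varphi=c$, so $\varphi=(1-iv^{-1})^{-1}c$; applying $1+iv^{-1}$ and using the equation once more, $(1+iv^{-1})(1-iv^{-1})^{-1}c=(1+iv^{-1})\varphi=\varphi-i\lambda\Gamma_0\varphi=(1-i\lambda\Gamma_0)(1+i\lambda\Gamma_0)^{-1}c$, which is~(\ref{Clay}). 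For the converse, from $c\in\mathcal{P}_d$ satisfying~(\ref{Clay}) I would set $\varphi=(1+i\lambda\Gamma_0)^{-1}c$ and $\psi=(1-iv^{-1})^{-1}c$, both in $\mathcal{P}_d$ by the observations; rewriting~(\ref{Clay}) as $(1+iv^{-1})\psi=(1-i\lambda\Gamma_0)\varphi$ and noting $(1-iv^{-1})\psi=c=(1+i\lambda\Gamma_0)\varphi$, adding the two identities gives $\psi=\varphi$ and subtracting them gives $v^{-1}\psi=-\lambda\Gamma_0\varphi$, hence $v^{-1}\varphi=-\lambda\Gamma_0\varphi$, i.e.~(\ref{G.Anotherform}).

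The step I expect to be the main obstacle is the second observation: showing that $(1\pm i\lambda\Gamma_0)^{-1}$ has a rapidly decaying kernel even though $\|\lambda\Gamma_0\|$ need not be less than $1$, so no Neumann series is available. This is a Combes--Thomas / analyticity argument: $\hat\Gamma_0(y)=\int_{\mathbb{T}}\frac{dy_{d+1}}{\sum_{j=1}^{d+1}2\cos2\pi y_j-E}$ extends holomorphically to a complex strip $\{|\Im y_j|<\eta\}$ because for $E\notin[-c_d,c_d]$ the denominator stays bounded away from $0$ there by compactness; since $1+i\lambda\hat\Gamma_0$ then also stays away from $0$ on a possibly smaller strip, its reciprocal is holomorphic and bounded there, and its Fourier coefficients -- the matrix entries of $(1+i\lambda\Gamma_0)^{-1}$ -- decay exponentially. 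A minor side issue is the set $\{n:\tan\pi(\alpha\cdot n+\theta)=0\}$, where $v^{-1}$ is formally infinite; rational independence of $\alpha$ makes this set contain at most one point, the potential vanishes there, so~(\ref{G.Anotherform}) is to be read as forcing $\varphi_n=0$ at that point, and all the identities above remain valid with this convention. Everything else is the elementary pointwise algebra displayed.
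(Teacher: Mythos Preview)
Your proof is correct and is precisely the standard Cayley-transform argument that the cited references \cite{cyconschrodinger,MR776654} contain; the paper itself does not prove this lemma but simply attributes it to those sources. Your treatment of the two technical points---the exponential decay of the kernel of $(1\pm i\lambda\Gamma_0)^{-1}$ via analyticity of $\hat\Gamma_0$ on a strip (using that $\hat\Gamma_0$ is real for $E\notin[-c_d,c_d]$, so $|1+i\lambda\hat\Gamma_0|\ge 1$), and the at-most-one point where $v^{-1}$ blows up---is exactly what is needed and matches the classical discussion.
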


It is easy to see that operator $ (1+iv^{-1})(1-iv^{-1})^{-1}$   is multiplication by
\begin{equation*}
   - e^{-2\pi i(n\cdot\alpha+\theta) }
\end{equation*}
 We extend the  Fourier Transform to  a sequence $f_n \in \mathcal{P}_d$  in distributional sense.
More concretely, for given  sequence $f_n \in \mathcal{P}_d$, define the  Fourier Transform by the following equation,
\begin{equation}
\nonumber
    \hat{f}(y)=\sum_{n\in \mathbb{Z}^d} e^{-2\pi in\cdot y}f_n.
\end{equation}
By Fourier transformation,
the  equation (\ref{Clay}) becomes
\begin{equation}\label{TRF3}
    q(y)\hat{c}(y)= e^{ -2\pi i\theta}\hat{c}(y+ \alpha),
\end{equation}
where $ q(y)=-\frac{1-i \lambda\hat{\Gamma}_0(y)}{1+i  \lambda\hat{\Gamma}_0(y)}$.
It is easy to see that $q(y)$ is analytic on $\mathbb{T}^d$, and  $|q(y)|=1$  for $y\in\mathbb{ R}^d$.
We can rewrite
\begin{equation}\label{defzeta}
    q(y)=e^{- 2\pi i\zeta(y)},
\end{equation}
where $\zeta$ is  analytic on $\mathbb{T}^d$. Notice that $\zeta$ only depends on  $\lambda$ and $E$.

Next we will study the ergodicity of the analytical function, the proof is similar to the proof in \cite{liulana1}.

Let $ \frac{p_n}{q_n}$ be the continued fraction approximants to $\alpha\in \mathbb{R}\backslash \mathbb{Q}$. Then
\begin{equation}\label{GDC1}
\forall 1\leq k <q_{n+1},  \text{dist}( k\alpha,\mathbb{Z})\geq  |q_n\alpha -p_n|,
\end{equation}
and
\begin{equation}\label{GDC2}
      \frac{1}{2q_{n+1}}\leq\Delta_n:=|q_n\alpha-p_n| \leq\frac{1}{q_{n+1}}.
\end{equation}
We define an index $\beta(\alpha)$ for $\alpha\in \mathbb{R}\backslash \mathbb{Q}$,
\begin{equation}\label{Def.beta}
    \beta(\alpha)=\limsup_{n\to \infty}\frac{\ln q_{n+1}}{q_n}.
\end{equation}

\begin{lemma}\label{Le}
Let $f: \mathbb{T}=\mathbb{R} / \mathbb{Z}\rightarrow \mathbb{R} $ be a real analytical function on the strip $\{z: |\Im z|\leq\rho\}$ and
$\alpha\in \mathbb{R}\backslash \mathbb{Q}$.
Suppose $\beta(\alpha)=0$. Then for any
integer sequence $\{j_k\}$ such that
\begin{equation*}
   \lim_{k\to \infty} ||j_k\alpha||_{\mathbb{R} / \mathbb{Z}}=0,
\end{equation*}
we have  for $x \in \{z: |\Im z|\leq \frac{\rho}{2}\}$ uniformly,
\begin{equation}\label{uniform1}
   \lim_{k\to \infty} (\sum_{j=0}^{j_k-1} f(x+j\alpha)-j_k\int_{\mathbb{T}}f(x)dx)=0.
\end{equation}
\end{lemma}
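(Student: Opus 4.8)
The plan is to solve the associated cohomological equation by a (genuine) analytic function and then telescope, exploiting that $j_k\alpha\to 0$ in $\mathbb{R}/\mathbb{Z}$. First I would write the Fourier expansion $f(z)=\sum_{m\in\mathbb{Z}}\hat f(m)e^{2\pi imz}$, so that $\hat f(0)=\int_{\mathbb{T}}f\,dx$. Since $f$ is bounded and analytic on $\{|\Im z|\le\rho\}$ and $1$-periodic, shifting the integration contour into the strip yields the exponential decay $|\hat f(m)|\le M e^{-2\pi\rho|m|}$ with $M=\sup_{|\Im z|\le\rho}|f(z)|$. I then introduce the formal primitive $g$ with $\hat g(0)=0$ and $\hat g(m)=\hat f(m)/(e^{2\pi im\alpha}-1)$ for $m\neq 0$, which formally satisfies $g(z+\alpha)-g(z)=f(z)-\hat f(0)$.

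The main point — and the step I expect to be the main obstacle — is to show that $g$ is again a genuine bounded continuous function on a slightly thinner strip; this is exactly where the hypothesis $\beta(\alpha)=0$ enters. Fix $\varepsilon\in(0,\pi\rho)$. By the definition (\ref{Def.beta}) there is $N$ with $q_{n+1}\le e^{\varepsilon q_n}$ whenever $q_n\ge N$. For $|m|$ large, let $n=n(m)$ be the index with $q_n\le|m|<q_{n+1}$; then (\ref{GDC1}) and (\ref{GDC2}) give
\[
||m\alpha||_{\mathbb{R}/\mathbb{Z}}\ \ge\ \Delta_n\ \ge\ \frac{1}{2q_{n+1}}\ \ge\ \frac12 e^{-\varepsilon q_n}\ \ge\ \frac12 e^{-\varepsilon|m|},
\]
while the finitely many small $|m|$ contribute only fixed positive numbers. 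Combining this with $|e^{2\pi im\alpha}-1|=2|\sin\pi m\alpha|\ge 4||m\alpha||_{\mathbb{R}/\mathbb{Z}}$ and the decay of $\hat f$ gives $|\hat g(m)|\le C e^{-(2\pi\rho-\varepsilon)|m|}$. Since $2\pi\rho-\varepsilon>2\pi\cdot\frac\rho2$, the series $\sum_m\hat g(m)e^{2\pi imz}$ converges absolutely and uniformly on $\{|\Im z|\le\rho/2\}$, so there $g$ is a bounded, $1$-periodic, continuous (in fact analytic) function, and rearranging the uniformly convergent series shows that $g(z+\alpha)-g(z)=f(z)-\hat f(0)$ holds pointwise on that strip.

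Finally, telescoping the difference equation gives, for every $j_k\in\mathbb{Z}$ and every $x$ with $|\Im x|\le\rho/2$,
\[
\sum_{j=0}^{j_k-1}\left(f(x+j\alpha)-\int_{\mathbb{T}}f\,dx\right)\ =\ g(x+j_k\alpha)-g(x).
\]
Writing $j_k\alpha=\ell_k+\eta_k$ with $\ell_k\in\mathbb{Z}$ and $|\eta_k|=||j_k\alpha||_{\mathbb{R}/\mathbb{Z}}\to 0$, the $1$-periodicity of $g$ gives $g(x+j_k\alpha)=g(x+\eta_k)$, and uniform continuity of $g$ on the compact cylinder $(\mathbb{R}/\mathbb{Z})\times[-\rho/2,\rho/2]$ yields $\sup_{|\Im x|\le\rho/2}|g(x+\eta_k)-g(x)|\to 0$ as $k\to\infty$. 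This is precisely (\ref{uniform1}). The only substantive step is the small-divisor estimate of the middle paragraph; the remainder is bookkeeping, and the argument runs parallel to the one used for the Maryland model in \cite{liulana1}.
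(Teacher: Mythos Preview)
Your argument is correct and rests on the same small-divisor bound $||m\alpha||_{\mathbb{R}/\mathbb{Z}}\ge c\,e^{-\varepsilon|m|}$ that the paper extracts from $\beta(\alpha)=0$ via (\ref{GDC1})--(\ref{GDC2}). The only cosmetic difference is packaging: you solve the cohomological equation $g(\cdot+\alpha)-g=f-\hat f(0)$ and then appeal to uniform continuity of $g$, while the paper writes the Birkhoff sum directly as $\sum_{j\neq 0}\frac{1-e^{-2\pi ijj_k\alpha}}{1-e^{-2\pi ij\alpha}}\hat f_j\,e^{-2\pi ijx}$ and applies dominated convergence---unwinding your telescoping identity reproduces exactly that series.
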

\begin{proof}
By the assumption, one has $|f_k|\leq Ce^{-2\pi \rho|k|}$, where $f_k$ is the Fourier coefficients of $f(x)$.
\begin{equation*}
    \sum_{m=0}^{j_k-1} f(x+m\alpha)-j_k\int_{\mathbb{T}}f(x)= \sum_{j\neq 0}\frac{1-e^{-2\pi ijj_k\alpha}}{1-e^{-2\pi ij\alpha}}  {f} _je^{-2\pi jx}.
\end{equation*}
By the fact $ ||j_k \alpha||_{\mathbb{R}/\mathbb{Z}}\to 0$,  it suffices to show

\begin{equation}\label{appendixG12}
     \sum_{j\neq 0}\sup_{k}|\frac{1-e^{-2\pi ijj_k\alpha}}{1-e^{-2\pi ij\alpha}}  {f} _j |e^{ \rho \pi|j|}<\infty.
   \end{equation}

By (\ref{GDC1}),(\ref{GDC2}) and $\beta(\alpha)=0$, one has
\begin{equation*}
   ||j\alpha||_{\mathbb{R} / \mathbb{Z}}\geq c e^{-\frac{\rho}{4}|j|}.
\end{equation*}
Thus
\begin{eqnarray*}
 |\frac{1-e^{-2\pi ijj_k\alpha}}{1-e^{-2\pi ij\alpha}}  {f} _j | &\leq& C e^{\frac{\rho}{4}|j|}e^{-2\pi \rho|j|} \\
   &\leq& C e^{-\frac{3\pi \rho}{2}|j|}
\end{eqnarray*}
This implies (\ref{appendixG12}).
\end{proof}

Now we concern the case $\beta(\alpha)>0$, by the definition, there exists a sequence $\{{q}_{n_k}\}$ such that
\begin{equation}\label{qk}
   q_{n_k+1}\geq  e^{\frac{3}{4}\beta {q}_{n_k}}.
\end{equation}

\begin{lemma}\label{Le1}
Let $f: \mathbb{T}=\mathbb{R} / \mathbb{Z}\rightarrow \mathbb{R} $ be a real analytical function on the strip $\{z: |\Im z|\leq\rho\}$.  Given $\alpha\in \mathbb{R}\backslash \mathbb{Q}$,  suppose $\beta(\alpha)>0$ and let  $\bar{\rho}=\frac{1}{30}\inf\{\rho, \beta\}$.
 Let
\begin{equation*}
\{t_m\}_{m=1}^{\infty}=\{{q}_{n_k}, 2{q}_{n_k}, 3{q}_{n_k},\cdots, \ell_{ k} {q}_{n_k}\} _{k=1}^{\infty} ,
\end{equation*}
where ${q}_{n_k}$ given by (\ref{qk}) and $ \ell_{ k}=\lfloor e^{\bar{\rho} {q}_{n_k}}\rfloor$\footnote{ $\lfloor \ell \rfloor$ denotes  the smallest integer not exceeding $\ell$}.
Then for $x \in \{z: |\Im z|\leq \frac{\bar{\rho}}{2}\}$ uniformly,
\begin{equation}\label{uniform}
   \lim_{m\to \infty} (\sum_{j=0}^{t_m-1} f(x+j\alpha)-t_m\int_{\mathbb{T}}f(x)dx)=0.
\end{equation}

\end{lemma}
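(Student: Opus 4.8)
The plan is to proceed by a direct Fourier computation in the spirit of the proof of Lemma~\ref{Le}, but with the single Diophantine input there replaced by a frequency split at $q_{n_k}$ that is adapted to the blocks $\{q_{n_k},2q_{n_k},\dots,\ell_k q_{n_k}\}$. Write $f(x)=\sum_{j\in\mathbb Z}f_j e^{-2\pi ijx}$; since $f$ is analytic on $\{z:|\Im z|\le\rho\}$ one has $|f_j|\le Ce^{-2\pi\rho|j|}$, and on $\{z:|\Im x|\le\bar\rho/2\}$ one has $|e^{-2\pi ijx}|\le e^{\pi\bar\rho|j|}$. For an integer $t\ge1$ and $j\ne0$ set
\[
\psi_t(j)=\sum_{b=0}^{t-1}e^{-2\pi ijb\alpha}=\frac{1-e^{-2\pi ijt\alpha}}{1-e^{-2\pi ij\alpha}},
\]
so that, exactly as in Lemma~\ref{Le},
\[
\sum_{r=0}^{t-1}f(x+r\alpha)-t\int_{\mathbb T}f(x)\,dx=\sum_{j\ne0}f_j e^{-2\pi ijx}\psi_t(j),
\]
the series converging absolutely and uniformly on $\{|\Im x|\le\bar\rho/2\}$ because $\bar\rho<\rho$. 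I will use two bounds on $\psi_t$: the trivial $|\psi_t(j)|\le t$ from the first formula, and, from $|1-e^{-2\pi i\omega}|\le2\pi\|\omega\|_{\mathbb R/\mathbb Z}$ together with $|1-e^{-2\pi i\omega}|\ge4\|\omega\|_{\mathbb R/\mathbb Z}$, the estimate $|\psi_t(j)|\le \pi|j|\,\|t\alpha\|_{\mathbb R/\mathbb Z}\big/\big(2\|j\alpha\|_{\mathbb R/\mathbb Z}\big)$.

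Fix $m$, let $k$ be the block it belongs to, write $t=t_m=sq_{n_k}$ with $1\le s\le\ell_k=\lfloor e^{\bar\rho q_{n_k}}\rfloor$, and abbreviate $q=q_{n_k}$. By (\ref{GDC2}) and (\ref{qk}), $\|q\alpha\|_{\mathbb R/\mathbb Z}=\Delta_{n_k}\le1/q_{n_k+1}\le e^{-\frac34\beta q}$, hence
\[
t\le\ell_k q\le q\,e^{\bar\rho q},\qquad \|t\alpha\|_{\mathbb R/\mathbb Z}\le s\Delta_{n_k}\le\ell_k\Delta_{n_k}\le e^{(\bar\rho-\frac34\beta)q}.
\]
Now split the $j$-sum at $q$. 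For $1\le|j|<q$, (\ref{GDC1})--(\ref{GDC2}) applied with $n=n_k-1$ give $\|j\alpha\|_{\mathbb R/\mathbb Z}\ge\Delta_{n_k-1}\ge\frac1{2q}$, so the second bound yields $|\psi_t(j)|\le\pi q|j|\,\|t\alpha\|_{\mathbb R/\mathbb Z}\le\pi q^2e^{(\bar\rho-\frac34\beta)q}$; multiplying by $|f_j|e^{\pi\bar\rho|j|}$ and summing over $j$ (a convergent geometric series since $\bar\rho<\rho$) bounds this part by $Cq^2e^{(\bar\rho-\frac34\beta)q}$. For $|j|\ge q$ I discard the denominator entirely and use only $|\psi_t(j)|\le t\le q\,e^{\bar\rho q}$ together with $\sum_{|j|\ge q}|f_j|e^{\pi\bar\rho|j|}\le Ce^{(\pi\bar\rho-2\pi\rho)q}$, bounding this part by $Cq\,e^{(\bar\rho(1+\pi)-2\pi\rho)q}$. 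Both exponents are strictly negative since $\bar\rho=\frac1{30}\inf\{\rho,\beta\}$ forces $\bar\rho<\frac34\beta$ and $\bar\rho(1+\pi)<2\pi\rho$; hence each part, and the bounds being independent of $x$ on the strip, tends to $0$ as $k\to\infty$, that is as $m\to\infty$. This gives (\ref{uniform}) uniformly on $\{|\Im x|\le\bar\rho/2\}$.

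The one delicate point is the range $|j|\ge q_{n_k}$: there $\|j\alpha\|_{\mathbb R/\mathbb Z}$ can be as small as $\Delta_{n_k}\asymp1/q_{n_k+1}$, which by (\ref{qk}) is exponentially small, so the factor $1/\|j\alpha\|_{\mathbb R/\mathbb Z}$ appearing in the second bound on $\psi_t$ may be of order $e^{\beta q_{n_k}}$ and would overwhelm the Fourier decay $e^{-2\pi\rho|j|}$ whenever $\beta\gg\rho$. The remedy — and the reason the blocks are truncated at length $\ell_k=\lfloor e^{\bar\rho q_{n_k}}\rfloor$ with the small constant $\bar\rho$ — is simply to throw that denominator away and rely on the crude estimate $|\psi_t(j)|\le t\le q_{n_k}e^{\bar\rho q_{n_k}}$, whose exponential rate $\bar\rho$ is dominated by the decay rate $2\pi\rho$ of $f_j$ once $|j|\ge q_{n_k}$. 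Dually, the short range $1\le|j|<q_{n_k}$ is controlled precisely because the subsequence $\{q_{n_k}\}$ was selected in (\ref{qk}) so that $q_{n_k+1}$ is exponentially large, which renders $\|t_m\alpha\|_{\mathbb R/\mathbb Z}$ super-polynomially small and kills the polynomial factor $q_{n_k}^2$ coming from the Diophantine lower bound $\|j\alpha\|_{\mathbb R/\mathbb Z}\ge\frac1{2q_{n_k}}$.
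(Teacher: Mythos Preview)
Your proof is correct. It also follows the same Fourier-series starting point as the paper, but the way you estimate the resulting sum is genuinely different and, in fact, cleaner than the paper's argument.

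The paper first shows $\|t_m\alpha\|_{\mathbb R/\mathbb Z}\to0$ so that $\psi_{t_m}(j)\to0$ for each fixed $j$, and then proves the summable majorant
\[
\sum_{j\ne0}\sup_m\Bigl|\psi_{t_m}(j)\,f_j\Bigr|e^{\pi\bar\rho|j|}<\infty,
\]
after which (\ref{uniform}) follows by dominated convergence. To obtain this uniform-in-$m$ bound the paper runs a rather detailed case analysis on the position of $|j|$ relative to the entire sequence $\{q_n\}$ (five subcases altogether, distinguishing $|j|=q_n$ versus $q_n<|j|<q_{n+1}$, and further splitting according to $n\lessgtr n_k$ and $|j|\lessgtr\bar\rho^{-1}\ln q_{n+1}$).

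You sidestep all of that. For each fixed $m$ you split only once, at $|j|=q_{n_k}$: low frequencies are handled by the coarse Diophantine bound $\|j\alpha\|\ge\Delta_{n_k-1}\ge1/(2q_{n_k})$ together with the smallness of $\|t_m\alpha\|$ coming from (\ref{qk}); high frequencies are handled by discarding the small denominator entirely and using the crude $|\psi_t(j)|\le t\le q_{n_k}e^{\bar\rho q_{n_k}}$, which the Fourier decay $e^{-2\pi\rho|j|}$ beats once $|j|\ge q_{n_k}$. This produces an explicit bound $O\bigl(q_{n_k}^2e^{-cq_{n_k}}\bigr)$ for the whole sum, uniform on the strip, and no dominated-convergence step is needed. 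Your final paragraph identifies precisely why the block length $\ell_k$ is capped at $e^{\bar\rho q_{n_k}}$: it is exactly what makes the trivial bound on $|\psi_t(j)|$ usable in the high-frequency range when $\beta\gg\rho$. The paper's approach, on the other hand, yields the slightly stronger statement that a single dominating series works simultaneously for all $m$.
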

\begin{proof}
First, we have
\begin{equation*}
    \sum_{j=0}^{t_m-1} f(x+j\alpha)-t_m\int_{\mathbb{T}}f(x)= \sum_{j\neq 0}\frac{1-e^{-2\pi ijt_m\alpha}}{1-e^{-2\pi ij\alpha}}  {f} _j e^{-2\pi jx}.
\end{equation*}
By (\ref{GDC2}) and (\ref{qk}), one has
 \begin{eqnarray*}
              ||t_m \alpha||_{\mathbb{R}/\mathbb{Z}} &\leq & \lfloor e^{\bar{\rho} {q}_{n_k}}\rfloor || {q}_{n_k}\alpha||_{\mathbb{R}/\mathbb{Z}} \\
               &\leq & Ce^{\bar{\rho} {q}_{n_k}} e^{-3\bar{\rho} {q}_{n_k}}\\
               &\leq & Ce^{-2\bar{\rho} {q}_{n_k}}.
            \end{eqnarray*}
This means
 $ ||t_m \alpha||_{\mathbb{R}/\mathbb{Z}} \to 0$, then it suffices to show
\begin{equation}\label{appendixG1}
     \sum_{j\neq 0}\sup_{m}|\frac{1-e^{-2\pi ijt_m\alpha}}{1-e^{-2\pi ij\alpha}}  {f} _j |e^{{\pi \bar{\rho}} |j|}<\infty.
   \end{equation}
Fix some  $t_m=\ell {q}_{n_k}$ with $0<\ell\leq e^{\bar{\rho} {q}_{n_k}}$.
 \par
 Case 1: $|j|=q_{n}$  for some $n$.

 If  $n\leq  n_k-1$, by (\ref{GDC2}) and (\ref{qk}),
  one has
 \begin{eqnarray*}
    |\frac{1-e^{-2\pi ijt_m\alpha}}{1-e^{-2\pi i j\alpha}}   f  _j | &\leq&  C |j| \ell \frac{ ||{q}_{n_k}\alpha||_{\mathbb{R}/\mathbb{Z}}}{||q_{n}\alpha||_{\mathbb{R}/\mathbb{Z}}}| f  _j| \\
     &\leq&  C |j|  \ell \frac{q_{n+1}}{q_{n_k+1}}| f  _j|\\
      &\leq&      C |j|  e^{\bar{\rho} {q}_{n_k}} \frac{q_{n_k}}{q_{n_k+1}}| f  _j|  \leq e^{-2\pi\bar{\rho}|j|}.
 \end{eqnarray*}
If  $n=  n_k$, by (\ref{GDC2}) and (\ref{qk}),
  one has
 \begin{eqnarray*}
    |\frac{1-e^{-2\pi ijt_m\alpha}}{1-e^{-2\pi i j\alpha}}   f  _j | &\leq&  C |j| \ell \frac{ ||{q}_{n_k}\alpha||_{\mathbb{R}/\mathbb{Z}}}{||q_{n_k}\alpha||_{\mathbb{R}/\mathbb{Z}}}| f  _j| \\
     &\leq&  C |j|  \ell | f  _j|\\
      &\leq&        e^{-2\pi\bar{\rho}|j|}.
 \end{eqnarray*}

 If  $n\geq n_k+1$, by (\ref{GDC2}) again,
  one has
  \begin{eqnarray*}
    |\frac{1-e^{-2\pi ijt_m\alpha}}{1-e^{-2\pi i j\alpha}}   f  _j | &\leq&  C \frac{t_m||j  \alpha||_{\mathbb{R}/\mathbb{Z}}}{||j \alpha||_{\mathbb{R}/\mathbb{Z}}}| f  _j| \\
     &\leq&    e^{-2\pi\bar{\rho}|j|} .
 \end{eqnarray*}

 Case 2:$q_{n}<|j|<q_{n+1}$ for some $n$.

  If   $|j|\geq \frac{1}{\bar{\rho}}\ln q_{n+1}$, by (\ref{GDC1}) and (\ref{GDC2}), we have
  \begin{eqnarray*}
    |\frac{1-e^{-2\pi ijt_m\alpha}}{1-e^{-2\pi i j\alpha}}   f  _j | &\leq&  C  q_{n+1}| f  _j| \\
     &\leq&   C e^{-2\pi\bar{\rho}|j|} .
 \end{eqnarray*}

  If $q_{n}<|j|<q_{n+1}$ and $|j|\leq \frac{1}{\bar{\rho}}\ln q_{n+1}$, let $|j|=\ell^\prime q_{n}+j_0$ with $|\ell^\prime|\leq \frac{\ln q_{n+1}}{q_{n}\bar{\rho}} $
 and $0\leq |j_0|<q_{n^\prime}$.
 Assume $j_0\neq 0$, then by (\ref{GDC1}) and (\ref{GDC2}), one has
   \begin{eqnarray*}
                                          ||j\alpha||_{\mathbb{R}/\mathbb{Z}} &\geq&  \Delta_{n-1}-\ell ^\prime\Delta_{n} \\
                                           &\geq& \frac{1}{C q_n}.
                                       \end{eqnarray*}
Thus we have
 \begin{eqnarray*}
    |\frac{1-e^{-2\pi ijt_m\alpha}}{1-e^{-2\pi i j\alpha}}   f  _j | &\leq& C q_n| f  _j| \\
     &\leq&    e^{-2\pi\bar{\rho}|j|} .
 \end{eqnarray*}
  \par
  Assume $j_0=0$, then  $|j|=\ell^\prime q_{n} $ with $|\ell^\prime|\leq \frac{\ln q_{{n}+1}}{q_{n}\bar{\rho}} $. Applying the
  same proof as in the  Case 1,
  we also have
  \begin{equation*}
   |\frac{1-e^{-2\pi ijt_m\alpha}}{1-e^{-2\pi i j\alpha}}   f  _j |<e^{-2\pi\bar{\rho}|j|} .
  \end{equation*}
  The  estimate (\ref{appendixG1})   is easy to obtain from the above   cases.
\end{proof}
{\bf The proof of Theorem \ref{Thac}}

\begin{proof}
By Theorem \ref{acsp} and Remark \ref{revisere},
it suffices to show  that in regime $\mathbb{R}\backslash[-c_d,c_d]$ ($d=1$), the spectral measures of surface Maryland model are  supported on a Hausdorff dimension zero set(we only prove the case for operator  $H$).
 Let
\begin{equation}\label{Support}
    B=\{E: E\in \mathbb{R}\backslash[-4,4] \text{ such that equation }   H\psi=E\psi\text{ has a polynomial bounded solution } \}.
\end{equation}
 By  Sch'nol Theorem  \cite{berezanskii1968expansions},  $B$ is a support of spectral measures   restriction on  $(-\infty,\infty)\backslash [-4,4]$.
 We will  prove that
$B$ is a  Hausdorff dimension zero set, which implies the Theorem directly.
Below is the details.
\par

Let $E\in B$, then there exists a  polynomial bounded solution of  $\psi_{n,x} $ of  equation $  H\psi=E\psi$.
By Theorem \ref{Thm.Anotherform} and Lemma \ref{TRF1}, we have (using (\ref{TRF3}))
\begin{equation}\label{addTRF3}
   e^{- 2\pi i\zeta(y)}\hat{c}(y)= e^{ -2\pi i\theta}\hat{c}(y+ \alpha),
\end{equation}
where $\hat{c}(y)$ is the Fourier transformation of a polynomially bounded vector $c\in \ell^2(\mathbb{Z})$.

Case 1: $\beta(\alpha)=0$.

For any sequence $\{j_k\}$ such that $ \lim_{k\to \infty} ||j_k\alpha||_{\mathbb{R} / \mathbb{Z}}=0$,
iterate (\ref{addTRF3}) $j_k$ times, we have
\begin{equation*}
   e^{- 2\pi i\sum_{j=0}^{j_k-1}\zeta(y+j\alpha)}\hat{c}(y)= e^{ -2\pi i j_k\theta}\hat{c}(y+ j_k\alpha).
\end{equation*}
Notice that $\hat{c}(y+ j_k\alpha)\to \hat{c}(y)$ and
combine  with (\ref{uniform1}), one has
\begin{equation*}
   \lim_{n\rightarrow\infty} ||j_k (\zeta_0(E)-\theta)||_{\mathbb{R}/\mathbb{Z}}=0,
 \end{equation*}
where  $\zeta_0 =\int_{\mathbb{R}/\mathbb{Z}}\zeta(y) dy $.
By the arbitrary of $j_k$, we must have
\begin{equation*}
    \zeta_0(E)\in \theta+\alpha \mathbb{Z}+\mathbb{Z}.
\end{equation*}
Notice that $\zeta_0(E) $  (non-constance) is analytical for $E\in (-\infty,\infty)\backslash [-4,4]$, then
set $B$ only contains countable set. Which means the Hausdorff dimension of $B$ is zero.

Case 2: $\beta(\alpha)>0$.

By the definition of  $\zeta(y)$,  $\zeta(y)$  is analytical on the strip $\{z:|\Im z|\leq \rho(E)\}$, where   $\rho(E)>0 $ is uniform with respect to
$|E|-4$.
Let $\{t_m\}$ be given by Lemma \ref{Le1}
 and iterate (\ref{addTRF3}) $t_m$ times, we have
\begin{equation*}
   e^{- 2\pi i\sum_{j=0}^{t_m-1}\zeta(y+j\alpha)}\hat{c}(y)= e^{ -2\pi it_m\theta}\hat{c}(y+ t_m\alpha).
\end{equation*}
By (\ref{uniform}) and the fact $ ||t_m \alpha||_{\mathbb{R}/\mathbb{Z}} \to 0$, we get
\begin{equation*}
   \lim_{m\to \infty} ||t_m(\zeta_0(E)-\theta)||_{\mathbb{R}/\mathbb{Z}}=0.
\end{equation*}
This implies for large $k$,
\begin{equation*}
    ||q_{n_k}(\zeta_0(E)-\theta)||_{\mathbb{R}/\mathbb{Z}}\leq   e^{-\frac{\bar{\rho}}{2}q_{n_k} },
\end{equation*}
where $\bar{\rho}$ has a uniformly nonzero lower bound depending on $|E|-4$ and $\beta(\alpha)$.

It is easy to check the set $\{x\in \mathbb{R}: ||q_{n_k}x||_{\mathbb{R}/\mathbb{Z}}\leq   e^{-\frac{\bar{\rho}}{2}q_{n_k} } \text{for large } k\}$
is a  Hausdorff measure zero set.
 In addition $\zeta_0(E) $ is analytical for $E\in (-\infty,\infty)\backslash [-4,4]$, we know $B$ is also a  Hausdorff measure zero set.

 For operator $H^{+}$,  we only need to replace the operator $G_0$, $\Gamma_0$ etc.  with  $G_0^+$, $\Gamma_0^+$ etc.
To avoid   repetition, we  omit the proof.
\end{proof}
\appendix
\section{The proof of Theorem \ref{Thm.Anotherform} }
{\bf The proof of Theorem \ref{Thm.Anotherform}}
\begin{proof}
If  $\varphi_n$ is a nonzero polynomially  bounded and solves equation (\ref{G.Anotherform}),  it is easy to compute directly that
$ \psi_{n,x}$ given by equation (\ref{G.Anotherform1}) is nonzero polynomially  bounded and solves equation (\ref{Def.H}).
\par
To the converse, suppose $ \psi_{n,x}$ is nonzero polynomially  bounded and solves equation (\ref{Def.H}),
this implies
\begin{equation}\label{G.appendix1}
    ((H_0-EI)\psi)_{(n,x)}=-\delta(x)\lambda v(n)\psi_{n,x}.
\end{equation}
Act $G_0=(H_0-EI)^{-1}$ on equation (\ref{G.appendix1}), one has
\begin{equation}\label{G.appendix2}
    \psi_{(n,x)}=- \sum _{\eta\in \mathbb{Z}^d}G_0(n-\eta,x)\lambda v(\eta)\psi_{\eta,0}.
\end{equation}
Thus $\psi_{\eta,0} $ is a non-zero vector, and let $x=0$ in (\ref{G.appendix2}),
we obtain
\begin{equation*}
    \psi_{(n,0)}=-\sum _{\eta\in \mathbb{Z}^d} \Gamma_0(n-\eta) \lambda v(\eta)\psi_{\eta,0}.
\end{equation*}
This yields that $\varphi_n$ given by (\ref{varphi_n}) solves equation (\ref{G.Anotherform}) and is nonzero polynomially bounded. We complete the proof
of Theorem \ref{Thm.Anotherform}.
\end{proof}

 \section*{Acknowledgments}
This research was partially
 supported by NSF DMS-1401204.
 I would like to thank Svetlana Jitomirskaya for drawing my attention to the problem and
inspiring discussions on this subject.

\footnotesize

\end{document}